\newcommand{\blue}[1]{{#1}} 
\newtheorem{theorem}{Theorem}[section]
\newtheorem{lemma}{Lemma}[section]
\newcommand{\squeezeup}{\vspace{-0.3cm}}
\begin{document}
\title{Identifying Seizure Onset Zone from the Causal Connectivity Inferred Using Directed Information}
\author{Rakesh Malladi,~\IEEEmembership{Student Member,~IEEE,} Giridhar Kalamangalam, Nitin Tandon, \\ and Behnaam Aazhang,~\IEEEmembership{Fellow,~IEEE}
\thanks{Copyright (c) 2016 IEEE. Personal use of this material is permitted. However, permission to use this material for any other purposes must be obtained from the IEEE by sending a request to pubs-permissions@ieee.org.}
\thanks{This work is funded in part by grant 1406447 from National Science Foundation and Texas Instruments. A portion of this work was presented at Society of Neuroscience (SfN) 2014 \cite{malladi2014} and International Conference on Acoustics, Speech and Signal Processing (ICASSP) 2015 \cite{malladi2015}.}
\thanks{Rakesh Malladi and Behnaam Aazhang are with the Department
of Electrical and Computer Engineering, Rice University, Houston, TX, 77005 USA.}
\thanks{Giridhar Kalamangalam and Nitin Tandon are with Department of Neurology and Department of Neurosurgery, respectively, at University of Texas Health Center, Houston, TX, 77005. E-mail: \{rm17, aaz\}@rice.edu, \{Giridhar.P.Kalamangalam, Nitin.Tandon\}@uth.tmc.edu.}
}

\maketitle

\begin{abstract}
In this paper, we developed a model-based and a data-driven estimator for directed information (DI) to infer the causal connectivity graph  between electrocorticographic (ECoG) signals recorded from brain and to identify the seizure onset zone (SOZ) in epileptic patients. Directed information, an information theoretic quantity, is a general metric to infer causal connectivity between time-series and is not restricted to a particular class of models unlike the popular metrics based on Granger causality or transfer entropy. The proposed estimators are shown to be almost surely convergent. Causal connectivity between ECoG electrodes in five epileptic patients is inferred using the proposed DI estimators, after validating their performance on simulated data. We then proposed a model-based and a data-driven SOZ identification algorithm to identify SOZ from the causal connectivity inferred using model-based and data-driven DI estimators respectively. The data-driven SOZ identification outperforms the model-based SOZ identification algorithm when benchmarked against visual analysis by neurologist, the current clinical gold standard. The causal connectivity analysis presented here is the first step towards developing novel non-surgical treatments for epilepsy.
\end{abstract}

\begin{IEEEkeywords}
Epilepsy, directed information, seizure onset zone, ECoG, causal connectivity.
\end{IEEEkeywords}

\section{Introdution}
Epilepsy is a common neurological disease affecting nearly $1\%$ of the world's population. Epilepsy is characterized by unprovoked seizures, which are periods of hypersynchronous activity in the brain. The current treatment options include medication, resective surgery and more recently, electrical stimulation approaches like vagus nerve and responsive neuro stimulation. However, medication is not able to stop seizures in about one-third of the patients. The efficacy of the other current neuro-modulation approaches is variable and almost never results in a cure \cite{bergey2015,rosenow2001}. The current approaches lack specificity and suffer from negative side effects (\cite{krook2015} and references therein). Selective modulation of the epileptic circuits in the brain via electrical stimulation \cite{sunderam2010}, optogenetics and designer receptive technologies \cite{krook2015} represent possible options for better treatments for this disabling disease. A crucial initial step in this endeavor is understanding how seizures originate and spread within the brain. Effective or causal connectivity \cite{Friston1994} quantifies how the activity spreads between different brain regions and can be used to characterize epileptic networks. In addition, causal connectivity can also be used to identify seizure onset zone (SOZ) (brain regions initiating seizures \cite{Luders2006}) and has been shown to predict the efficacy of resective surgery \cite{mierlo2014, pittau2015}. The main objective of this paper is to develop estimators of directed information (DI),  derive causal connectivity between brain regions and identify the SOZ using electrocorticographic (ECoG) data in patients with epilepsy.

Estimating causal connectivity from electrophysiological recordings of brain has been the focus of many papers. A good summary is provided in \cite{mierlo2014}. The causality referred to in this paper is in the Wiener-Granger causal sense \cite{bressler2011}. Metrics based on Granger causality (GC) \cite{Granger1969, Blinowska2011} and information theory like transfer entropy \cite{Schreiber2000} are commonly used to estimate causal connectivity between continuous-valued ECoG recordings. However, these techniques are well-suited only for a specific model and subset of the recorded signals. For instance, GC-based measures are applicable only for data from multivariate autoregressive (MVAR) processes. ECoG data are often modeled using linear MVAR model \cite{Blinowska2011, mierlo2014}, even though associations between seizure loci detected by ECoG recordings are likely nonlinear \cite{pereda2005, lehnertz2008}. 

We propose to develop a causal metric that would be applicable to diverse models and different types of electrophysiological recordings of brain. Directed information, used to infer causal connections between spike trains in \cite{Quinn2011a, so2012, soltani2014}, can indeed be further developed into a general technique to estimate causal connectivity. The definition of DI is based on the underlying probability distribution and no assumptions are imposed on the underlying distributions. Directed information was developed for discrete-valued time-series in \cite{Marko1973,Massey1990, Kramer1998} and nonparametrically estimated in \cite{Weissman2013a}. DI quantifies the amount of causal information about one time-series that is explained by the other time-series \cite{Quinn2011a}. Modified time-lagged directed information is proposed in \cite{Liu2012, Aviyente2012a} to reduce the computational complexity of estimating directed information. DI is also used in many other applications \cite{Rao2006, Tatikonda2009, Permuter2011}. The definition of DI is broadened to the class of continuous-valued processes like ECoG signals in this paper \cite{malladi2014,Liu2012,Amblard2011,Aviyente2012a}. If the data is assumed to be from a MVAR process with Gaussian white noise, DI is equivalent to Granger causality \cite{Amblard2011} and if the data satisfies the general Markov condition, DI is very closely related to transfer entropy \cite{Aviyente2012a,Liu2012}. The main advantage of DI over other existing techniques is that DI is applicable to a large class of electrophysiological recordings from brain, including spike trains, EEG and ECoG, and is not restricted to a particular class of models.

We developed an almost surely convergent model-based and data-driven estimators of DI in this paper, inspired by prior work \cite{Cover2006,Quinn2011a}. The performance of the proposed DI estimators was validated on linear and nonlinear simulated models and compared with the Granger causality metric \cite{Granger1969,barnett2014}. The statistical significance of the causal connection inferred using DI and GC estimates was demonstrated using an adaptation \cite{Diks2001} of stationary bootstrap \cite{Politis1994}. We then applied both model-based and data-driven DI estimators to infer causal connectivity graphs between ECoG channels from twelve seizures in five patients with epilepsy.

The DI metric with model-based and data-driven estimators proposed in this paper allows us the flexibility to simultaneously use both these estimators and identify which one leads to more reliable causal connectivity graphs from real ECoG data. This would also allows us to examine the appropriateness of imposing linear MVAR assumption on ECoG data. We used the model-based DI estimator with MVAR model assumption to detect the linear causal interactions and the data-driven DI estimator to detect both linear and nonlinear causal interactions between ECoG channels. We observed that nonlinear causal interactions between channels are stronger around the onset of a seizure, as widely believed \cite{lehnertz2008}.

We then proposed a model-based and a data-driven SOZ identification algorithm to identify SOZ from the causal connectivity graphs inferred using model-based and data-driven DI estimators respectively. The SOZ identified by model-based and data-driven algorithms are respectively the isolated nodes and strong sources in the corresponding causal connectivity graphs. Despite the numerous SOZ identification algorithms available \cite{Wilke2011,sabesan2009,panzica2013,mierlo2014, pittau2015}, the current clinical gold standard is still the visual analysis of ECoG data by the neurologist. We therefore compared the performance of both model-based and data-driven SOZ identification algorithms with visual analysis by the neurologist. We find that the data-driven approach outperforms the model-based approach and also leads to more interpretable results. The methodology proposed here should be extended to analyze the whole ECoG record to better understand the evolution of seizure mechanisms over time.

The main algorithmic contributions of this paper are 
\begin{itemize}
\item Developing an almost surely convergent model-based and data-driven DI estimator, described in sections III and IV.
\item Developing a MVAR model-based and a data-driven SOZ identification algorithm, described in section VI.
\end{itemize}
\section{Directed Information}
Consider the $N$ samples recorded at a sampling frequency $F_s$ from each ECoG electrode implanted in an epileptic patient. Without loss of generality, let us focus on two channels $\mathbf{X}$ and $\mathbf{Y}$. The $N$ samples recorded from the two channels are denoted by $\mathbf{X}^N = \left(x_1,x_2,\cdots,x_N \right)^{\text{T}}$ and $\mathbf{Y}^N = \left(y_1,y_2,\cdots,y_N \right)^{\text{T}}$. Also let $\mathbf{W}$ denote the matrix of samples recorded from a group of channels excluding $\mathbf{X}$ and $\mathbf{Y}$. For notational simplicity, the elements corresponding to the non-positive subscripts are treated as empty sets and the subscripts are not shown when equal to $1$. 

The directed information, $I\left(\mathbf{X}^N \rightarrow \mathbf{Y}^N\right)$, from $N$ samples of continuous-valued random process $\mathbf{X}$ to those of $\mathbf{Y}$  is defined as
\begin{equation}\label{dir_inf_def}
 I\left(\mathbf{X}^N \rightarrow \mathbf{Y}^N\right) = h\left(\mathbf{Y}^N\right) - h\left(\mathbf{Y}^N \| \mathbf{X}^N\right),
\end{equation}
where $h\left(\mathbf{Y}^N\right)$ is the differential entropy of the $N$-dimensional continuous random vector $\mathbf{Y}^N$ \cite{Cover2006} and $h\left(\mathbf{Y}^N \| \mathbf{X}^N\right)$ is the causally conditioned differential entropy of $\mathbf{Y}^N$ causally conditioned on $\mathbf{X}^N$. The causally conditioned differential entropy is defined as
\begin{equation} \label{cc_entropy}
h\left(\mathbf{Y}^N \| \mathbf{X}^N\right) = \textstyle\sum\limits_{n=1}^N h\left(y_n | \mathbf{Y}^{n-1},\mathbf{X}^n\right).
\end{equation}
The definitions of DI and causally conditioned differential entropy in \eqref{dir_inf_def} and \eqref{cc_entropy} are obtained by broadening the definitions of the same quantities from discrete-time, discrete-valued random processes \cite{Massey1990, Kramer1998} to discrete-time, continuous-valued processes \cite{malladi2015}. One of the main differences between discrete-valued and continuous-valued random processes is that the entropy of a discrete-valued process is always non-negative, whereas the  differential entropy of a continuous-valued process can be negative \cite{Cover2006}. However, DI is always non-negative since conditioning cannot increase differential entropy \cite{Cover2006}, i.e., $I\left(\mathbf{X}^N \rightarrow \mathbf{Y}^N \right) \geq 0$. DI can be interpreted as the number of bits of uncertainty in one process that is causally explained away by the other process. If $I\left(\mathbf{X}^N \rightarrow \mathbf{Y}^N \right) = 0$, then there is no causal influence from $\mathbf{X}$ to $\mathbf{Y}$. The DI is not a symmetric metric in general, i.e., $I\left(\mathbf{X}^N \rightarrow \mathbf{Y}^N\right)$ $\neq$ $I\left(\mathbf{Y}^N \rightarrow \mathbf{X}^N\right)$. Note that DI can also be expressed in terms of conditional mutual information \cite{Cover2006}, $I\left(y_n ; \mathbf{X}^n | \mathbf{Y}^{n-1} \right)$, as
\begin{align} \label{dir_inf_def1}
I\left(\mathbf{X}^N \rightarrow \mathbf{Y}^N\right) & = \textstyle\sum\limits_{n=1}^N \left\{h\left(y_n | \mathbf{Y}^{n-1}\right) - h\left(y_n | \mathbf{Y}^{n-1},\mathbf{X}^n\right) \right\} \nonumber \\
& = \textstyle\sum\limits_{n=1}^{N} I\left(y_n ; \mathbf{X}^n | \mathbf{Y}^{n-1} \right).
\end{align}
Now, the DI between the time-series $\mathbf{X}$ and $\mathbf{Y}$ is defined as
\begin{align} \label{dir_inf_def2}
I\big(\!\mathbf{X}\! \rightarrow \!\mathbf{Y}\!\big)\!&  =\!\! \lim_{\scriptscriptstyle N\rightarrow \infty}\! {\textstyle \frac{1}{N}} I\big(\!\mathbf{X}^N \!\!\rightarrow \!\mathbf{Y}^N\!\big)\!\! \nonumber \\
& = \!\! \lim_{\scriptscriptstyle N\rightarrow \infty} {\textstyle \frac{1}{N}}h\big(\!\mathbf{Y}^N\!\big)\!\! -\!\! \lim_{\scriptscriptstyle N\rightarrow \infty} {\textstyle \frac{1}{N}} h\big(\!\mathbf{Y}^N \| \mathbf{X}^N\!\big)\! \nonumber \\
& = \! h\big(\!\mathbf{Y}\!\big)\! - \!h\big(\!\mathbf{Y}\|\mathbf{X}\!\big),
\end{align}
provided the limits exist. $h\left(\mathbf{Y}\right)$ and $h\left(\mathbf{Y}\|\mathbf{X}\right)$ are respectively the differential entropy of $\mathbf{Y}$ and the causally conditioned differential entropy of $\mathbf{Y}$ given $\mathbf{X}$. The DI from $\mathbf{Y}$ to $\mathbf{X}$ is also similarly defined.

Furthermore, the DI defined earlier is easily extended to define directed information from $\mathbf{X}$ to $\mathbf{Y}$, causally conditioned on $\mathbf{W}$. Note that $\mathbf{W}$ comprises the samples recorded from a group of ECoG channels excluding $\mathbf{X}$ and $\mathbf{Y}$. The causally conditioned DI, $I\left(\!\mathbf{X}  \rightarrow \mathbf{Y} \|  \mathbf{W}\!\right)$, is defined as
\begin{align}\label{condDI_def}
I\left(\!\mathbf{X}  \rightarrow \mathbf{Y}\right. & \| \left. \mathbf{W}\!\right) \!  = \!\lim_{\scriptscriptstyle {N\rightarrow \infty}} {\textstyle\frac{1}{N}} I\left(\!\mathbf{X}^N \!\rightarrow \!\mathbf{Y}^N\!\| \mathbf{W}^N\!\right) \! \nonumber \\
&  =\! \lim_{\scriptscriptstyle {N\rightarrow \infty}} {\textstyle\frac{1}{N}} \left\{\! h\left(\!\mathbf{Y}^N\|\mathbf{W}^N\!\right)\! -\! h\left(\!\mathbf{Y}^N \! \| \!\mathbf{X}^N\!,\!\mathbf{W}^N\!\right)\!\right\}, \nonumber \\
& \blue{=\! h\left(\mathbf{Y}\|\mathbf{W}\right) - h\left(\mathbf{Y}\|\mathbf{W},\mathbf{X}\right)},
\end{align} 
where $h\left(\mathbf{Y}^N \| \mathbf{X}^N\!,\mathbf{W}^N\right)\! = \!\textstyle \sum\limits_{n=1}^N h\big(y_n|\mathbf{Y}^{n-1},\mathbf{X}^n,\mathbf{W}^n\big)$ is the  \blue{differential entropy of $\mathbf{Y}^N$ causally conditioned on $\mathbf{X}^N$ and $\mathbf{W}^N$, $h\left(\mathbf{Y}\|\mathbf{W}\right)$ is the causal conditioned differential entropy of $\mathbf{Y}$ given $\mathbf{W}$ and $h\left(\mathbf{Y}\|\mathbf{W},\mathbf{X}\right)$ is the causally conditioned differential entropy of $\mathbf{Y}$ given the causal past of $\mathbf{W}$ and $\mathbf{X}$.} We use the directed information defined here to learn the causal connectivity graph between all ECoG channels and identify the SOZ of epileptic patients.
\section{Universal Estimator for Directed Information} \label{sec:DI_Est_Algo}
A universal estimator for directed information between channels $\mathbf{X}$ and $\mathbf{Y}$, $I\left( \mathbf{X} \rightarrow \mathbf{Y} \right)$, and the causally conditioned DI, $I\left( \mathbf{X} \rightarrow \mathbf{Y} \| \mathbf{W}\right)$ is developed in this section. The proposed estimator is universal and is shown to be almost surely convergent assuming that the causal conditional likelihood (CCL) is known. If CCL is not known and is estimated, then the convergence of the proposed DI estimator is dependent on the CCL estimator. The ideas used in developing the proposed DI estimator are inspired by prior work \cite{Quinn2011a, Cover2006}. Without loss of generality, we will \blue{first} focus on estimating \blue{the pairwise DI,} $I\left(\mathbf{X} \rightarrow \mathbf{Y}\right)$. \blue{We will then outline the procedure to extend this pairwise DI estimator to estimate the causally conditioned DI, $I\left( \mathbf{X} \rightarrow \mathbf{Y} \| \mathbf{W}\right)$.} The inputs to the \blue{proposed pairwise DI} estimator are the observed $N$ samples of time-series $\mathbf{X}$ and $\mathbf{Y}$. The main idea is to develop an almost surely convergent estimator for the entropies in \eqref{dir_inf_def2} and the difference between the two entropy estimates is an almost surely convergent estimate for $I\left(\mathbf{X} \rightarrow \mathbf{Y}\right)$. Let us first focus on the causally conditioned differential entropy estimator, $\hat{h}\left(\mathbf{Y} \| \mathbf{X}\right)$. 

\textit{Assumption 1} - The random processes $\mathbf{X}$ and $\mathbf{Y}$ are assumed to be stationary, ergodic and Markovian in the observed time-window. These are reasonable assumptions to model ECoG data. First, an implicit assumption in the problem of estimating the causal connectivity from a ECoG data segment is that the causal connectivity does not vary in this segment, which is mathematically captured by stationarity. The entire ECoG data record is usually not stationary and stationary segments are identified using either sliding windows \cite{blinowska2006} or change-point detection algorithms \cite{malladi2013}. We used  the sliding window approach in this paper. \blue{A crucial parameter in this process is the length of the window in which data is assumed to be stationary. It is also important to realize that we need a minimum amount of data points to reliably estimate any unknown parameters involved. It is recommended that the number of data points should be much larger (as a thumb rule, at least an order of magnitude larger) than the number of parameters to be estimated \cite{blinowska2006}.} Directed information is then  estimated in each stationary segment using the algorithm proposed in this section. Directed information for the entire time-series is the sum of the DI estimates from each stationary segment and is interpreted as the total amount of uncertainty in one time-series in the entire recording window that is explained by the other time-series. Second, ergodicity is required to ensure that the estimates from long-enough recording windows converge to the true value. Finally, the Markovian assumption captures the dependence of the current activity on the past activity at different electrodes. Let the current sample of the time-series $\mathbf{Y}$ depend on the past $J_{yy}$ and past $K_{yx}$ samples of the time-series $\mathbf{Y}$ and $\mathbf{X}$ respectively. Note that $\left(J_{yy},K_{yx}\right)$ are unknown and should be estimated from data. The explicit model of the dependence is captured by the causal likelihood of $y_n$ conditioned on the past activity at electrodes $\mathbf{X}$ and $\mathbf{Y}$. This CCL is denoted by $\mathrm{P}\big(y_n| \mathbf{Y}^{n-1}_{n-J_{yy}},\mathbf{X}^n_{n-K_{yx}+1} \big)$ and can be estimated using either a model-based or a data-driven approach. Let us assume for now that CCL is known.

\textit{Assumption 2} - Let us also assume that differential entropy of the first sample, $y_1$, of time-series $\mathbf{Y}$ exists and that for some time-index $l \in \left[ 1,N\right]$, the conditional differential entropy of $y_l$, conditioned on $\mathbf{Y}_{l-J_{yy}}^{l-1}$ and $\mathbf{X}_{l-K_{yx}+1}^{l}$ also exists, i.e., $h\left(y_1\right),h\left(y_l|\mathbf{Y}_{l-J_{yy}}^{l-1}, \mathbf{X}_{l-K_{yx}+1}^{l} \right) \in \mathbb{R}$.
\begin{lemma} \label{lemma1}
Let Assumptions 1 and 2 hold. Then $h\left(\mathbf{Y}\right),$ $h\left(\mathbf{Y} \| \mathbf{X} \right),$  $I\left(\mathbf{X} \rightarrow \mathbf{Y}\right)$ exists and are in  $\mathbb{R}$.
\end{lemma}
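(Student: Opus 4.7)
The plan is to reduce the two limits in \eqref{dir_inf_def2} to finite-dimensional conditional differential entropies using the Markov-plus-stationary structure in Assumption 1, and then use Assumption 2 to certify that every quantity involved is a real number.

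First, I would chain-rule-expand $h(\mathbf{Y}^N) = \sum_{n=1}^N h(y_n | \mathbf{Y}^{n-1})$ and, using \eqref{cc_entropy}, $h(\mathbf{Y}^N \| \mathbf{X}^N) = \sum_{n=1}^N h(y_n | \mathbf{Y}^{n-1}, \mathbf{X}^n)$. Once $n$ exceeds the Markov order $J_{yy}$ (respectively, exceeds $L := \max(J_{yy}, K_{yx}-1) + 1$), the Markov assumption collapses the conditioning set to $\mathbf{Y}_{n-J_{yy}}^{n-1}$ (respectively to $\mathbf{Y}_{n-J_{yy}}^{n-1}$ together with $\mathbf{X}_{n-K_{yx}+1}^{n}$), and stationarity makes each such term independent of $n$; call these constants $c_Y$ and $c_{YX}$. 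Both partial sums then equal a fixed finite initial segment plus a linear-in-$N$ tail of constant terms, so the Cesaro averages $\tfrac{1}{N} h(\mathbf{Y}^N)$ and $\tfrac{1}{N} h(\mathbf{Y}^N \| \mathbf{X}^N)$ converge to $c_Y$ and $c_{YX}$ respectively. By \eqref{dir_inf_def2} this yields $h(\mathbf{Y}) = c_Y$, $h(\mathbf{Y}\|\mathbf{X}) = c_{YX}$, and $I(\mathbf{X}\to\mathbf{Y}) = c_Y - c_{YX}$.

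The hard part, and really the only subtle verification, is guaranteeing that each conditional differential entropy in sight is finite, since in general a differential entropy can equal $-\infty$. Assumption 2 hands me $c_{YX} \in \mathbb{R}$ directly. For $c_Y$ I would use the fact that conditioning cannot increase differential entropy, giving $c_Y \geq c_{YX} > -\infty$, together with $c_Y \leq h(y_{J_{yy}+1}) = h(y_1) < \infty$ from stationarity and Assumption 2. The finitely many boundary terms in each chain-rule expansion are squeezed between these same bounds (interpreting the small-$n$ conditional entropies via the two-sided stationary extension of the process if necessary), so they contribute a finite constant that vanishes after dividing by $N$. Putting these together shows $h(\mathbf{Y})$, $h(\mathbf{Y}\|\mathbf{X})$, and $I(\mathbf{X}\to\mathbf{Y})$ all exist and lie in $\mathbb{R}$.
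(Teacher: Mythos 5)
Your proposal is correct and follows essentially the same route as the paper: both arguments collapse the chain-rule terms to an $n$-independent constant via the Markov and stationarity assumptions, invoke the Cesàro mean theorem for the limit of the average, and certify finiteness by sandwiching each conditional entropy between $h(y_1)$ and the causally conditioned term guaranteed finite by Assumption 2 (using that conditioning cannot increase differential entropy). The paper phrases the sandwich as a monotone non-increasing sequence bounded above by $h(y_1)$ and below by $h\left(y_l|\mathbf{Y}_{l-J}^{l-1},\mathbf{X}_{l-K+1}^{l}\right)$, which is the same squeeze you describe.
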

\begin{proof} Stationarity and the property that conditioning cannot increase the differential entropy are the main ideas in the proof, which is in the Appendix~\ref{AppendixA}.
\end{proof}
\begin{lemma} \label{lemma2}
Let Assumptions 1 and 2 hold. Then for some time-index $l$
\begin{align} \label{lemma2eq}
\textstyle\frac{1}{N} h\left(\mathbf{Y}^N \| \mathbf{X}^N \right)\! = \!\mathbb{E}\!\left[\!-\! \log \mathrm{P}\!\left(y_l| \mathbf{Y}^{l-1}_{l-J_{yy}},\mathbf{X}^l_{l-\left(K_{yx}-1\right)} \!\right)\! \right].
\end{align}
\end{lemma}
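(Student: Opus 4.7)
The plan is to unfold the causally conditioned differential entropy $h(\mathbf{Y}^N\|\mathbf{X}^N)$ into its $N$ per-sample summands via \eqref{cc_entropy} and then collapse every summand to the same constant using the Markov and stationarity assumptions. Explicitly, $h(\mathbf{Y}^N\|\mathbf{X}^N)=\sum_{n=1}^N h(y_n\mid \mathbf{Y}^{n-1},\mathbf{X}^n)$, so the task reduces to identifying every term with the expectation on the right-hand side of \eqref{lemma2eq}.

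Next I would invoke the Markov orders $(J_{yy},K_{yx})$ from Assumption~1: for each index $n$ with $n\geq \max(J_{yy}+1,K_{yx})$, the conditional density of $y_n$ given its full causal past factors through the length-$J_{yy}$ window of $\mathbf{Y}$ and the length-$K_{yx}$ window of $\mathbf{X}$ ending at $n$, hence $h(y_n\mid \mathbf{Y}^{n-1},\mathbf{X}^n)=h(y_n\mid \mathbf{Y}^{n-1}_{n-J_{yy}},\mathbf{X}^n_{n-K_{yx}+1})$. Joint stationarity of $(\mathbf{X},\mathbf{Y})$ then makes this quantity shift-invariant in $n$, so it equals $h(y_l\mid \mathbf{Y}^{l-1}_{l-J_{yy}},\mathbf{X}^l_{l-K_{yx}+1})$ for the fixed $l$ in the statement. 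By the defining integral of conditional differential entropy this constant is precisely $\mathbb{E}[-\log \mathrm{P}(y_l\mid \mathbf{Y}^{l-1}_{l-J_{yy}},\mathbf{X}^l_{l-(K_{yx}-1)})]$, which Assumption~2 guarantees lies in $\mathbb{R}$. Summing $N$ copies of the same constant and dividing by $N$ then yields \eqref{lemma2eq}.

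The main obstacle will be the treatment of the finitely many boundary indices $n<\max(J_{yy}+1,K_{yx})$ for which the truncated past window would run into non-positive subscripts and therefore be shortened under the paper's convention, so that the Markov collapse does not apply verbatim and the corresponding summands need not match the stationary value. Each such boundary term is real by Lemma~\ref{lemma1} and Assumption~2, so their aggregate contribution is $O(1)$ and, after the $1/N$ normalization, is $o(1)$ as $N\to\infty$. Accordingly \eqref{lemma2eq} is naturally read either as a per-sample entropy-rate identity consistent with the limit definitions in \eqref{dir_inf_def2} and \eqref{condDI_def}, or as an exact equality under the canonical bi-infinite stationary extension of the joint process, which is also the setting in which the estimator consistency results later in Section~\ref{sec:DI_Est_Algo} are to be interpreted.
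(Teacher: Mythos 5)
Your proof follows essentially the same route as the paper's: the chain rule \eqref{cc_entropy} together with the Markov assumption collapses each summand to $h\left(y_n\mid \mathbf{Y}^{n-1}_{n-J_{yy}},\mathbf{X}^n_{n-K_{yx}+1}\right)$, stationarity makes this quantity independent of $n$, and the resulting constant is identified with the stated expectation. Your explicit treatment of the finitely many boundary indices $n<\max\left(J_{yy}+1,K_{yx}\right)$ is in fact more careful than the paper, which asserts the equality exactly without comment; reading \eqref{lemma2eq} as an asymptotic identity (or as exact under the bi-infinite stationary extension), as you do, is the right resolution and is all that Theorem~\ref{theorem1} requires.
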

\begin{proof}
The proof uses the definition of causally conditioned differential entropy \eqref{cc_entropy}, the Markovian and the stationarity assumptions. The proof is in the Appendix~\ref{AppendixA}.
\end{proof}
\begin{theorem} \label{theorem1}
Let Assumptions 1 and 2 hold. Then the almost surely convergent causally conditioned differential entropy estimator is
\begin{align} \label{theorem1eq} 
\hat{h}\left(\mathbf{Y} \| \mathbf{X} \right) \!=\! \textstyle\frac{1}{N} \!\textstyle\sum\limits_{n=1}^{N} \!\left\{\!- \!\log \mathrm{P}\!\left(\!y_n| \mathbf{Y}^{n-1}_{n-J_{yy}},\mathbf{X}^n_{n-\left(K_{yx}-1\right)}\! \right)\! \right\}. 
\end{align}
\end{theorem}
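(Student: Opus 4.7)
The plan is to express the estimator as a time average of a stationary ergodic sequence, identify its common expectation via Lemma~\ref{lemma2}, and then invoke Birkhoff's pointwise ergodic theorem.

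First I would set
\begin{equation*}
Z_n \;=\; -\log \mathrm{P}\!\left(y_n \,\big|\, \mathbf{Y}^{n-1}_{n-J_{yy}}, \mathbf{X}^n_{n-(K_{yx}-1)}\right),
\end{equation*}
so that $\hat{h}(\mathbf{Y}\|\mathbf{X}) = \frac{1}{N}\sum_{n=1}^N Z_n$. By Assumption~1 the joint process $(\mathbf{X},\mathbf{Y})$ is stationary and ergodic, and $Z_n$ is a fixed measurable functional of a sliding window of finite length (determined by $J_{yy}$ and $K_{yx}$) of that process; hence the scalar sequence $\{Z_n\}$ is itself stationary and ergodic, since measurable factors of an ergodic shift remain ergodic. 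Stationarity then guarantees that $\mathbb{E}[Z_n]$ is independent of $n$.

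Next I would identify this common expectation. Lemma~\ref{lemma2} gives $\mathbb{E}[Z_l] = \frac{1}{N}\,h(\mathbf{Y}^N\|\mathbf{X}^N)$, and by stationarity this holds for every $l$ and every $N$. Passing to the limit $N\to\infty$, and using Lemma~\ref{lemma1} to ensure existence and finiteness of $h(\mathbf{Y}\|\mathbf{X})$, I obtain $\mathbb{E}[Z_1] = h(\mathbf{Y}\|\mathbf{X}) \in \mathbb{R}$. Finiteness of this real-valued expectation automatically places $Z_n$ in $L^1$, which is the integrability hypothesis needed for Birkhoff's theorem. The theorem then yields
\begin{equation*}
\hat{h}(\mathbf{Y}\|\mathbf{X}) \;=\; \frac{1}{N}\sum_{n=1}^N Z_n \;\xrightarrow{\text{a.s.}}\; \mathbb{E}[Z_1] \;=\; h(\mathbf{Y}\|\mathbf{X}),
\end{equation*}
which is the claim.

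The only step that warrants real care is the transfer of ergodicity from the joint process $(\mathbf{X},\mathbf{Y})$ to the derived sequence $\{Z_n\}$: writing $Z_n = f \circ T^{n-1}$ for the shift $T$ on path space and a fixed measurable $f$, one checks that the shift-invariant $\sigma$-algebra for $\{Z_n\}$ is contained in the $T$-invariant $\sigma$-algebra, so ergodicity descends. This is standard but worth articulating explicitly. Once this, together with the $L^1$ property from Lemma~\ref{lemma1} applied to $Z_n$, is in hand, the conclusion is a one-line application of Birkhoff.
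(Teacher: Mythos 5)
Your proof is correct and follows essentially the same route as the paper's: write the estimator as a time average $\frac{1}{N}\sum_{n=1}^{N}Z_n$ of a fixed function of a finite window of the joint process, identify the common expectation $\mathbb{E}[Z_n]=\frac{1}{N}h\left(\mathbf{Y}^N\|\mathbf{X}^N\right)=h\left(\mathbf{Y}\|\mathbf{X}\right)$ via Lemma~\ref{lemma2} (noting the right-hand side is independent of $N$), and conclude almost sure convergence by a strong law. The only difference is which limit theorem is invoked in the last step --- the paper uses the SLLN for Markov chains, leaning on the Markov part of Assumption 1, while you use Birkhoff's pointwise ergodic theorem, leaning on stationarity and ergodicity --- and both are valid under Assumption 1.
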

\begin{proof}
The proof is based on two observations: the first is that the right-hand side of \eqref{lemma2eq} does not depend on $N$ and therefore it is easy to compute its limit as $N\rightarrow \infty$. The second observation is that the strong law of large numbers (SLLN) for Markov chains \cite{Meyn2009} can be applied to estimate the expectation on the right-hand side of \eqref{lemma2eq}. The detailed proof is in the Appendix~\ref{AppendixA}.
\end{proof}
An almost surely convergent estimator for $h\left(\mathbf{Y}\right)$ can be easily derived using Theorem~\ref{theorem1}, simply by modeling the dependence of the current samples of $\mathbf{Y}$ on its own $J_{yy}^{\prime}$ past samples. This is equivalent to setting $K_{yx}=0$. The difference between the two estimators, $\hat{h}\left(\mathbf{Y}\right)$ and $\hat{h}\left(\mathbf{Y} \| \mathbf{X} \right)$, is the almost surely convergent estimator for DI from $\mathbf{X}$ to $\mathbf{Y}$, $\hat{I}\left(\mathbf{X} \rightarrow \mathbf{Y}\right)$. This is stated in Theorem~\ref{theorem2}.
\begin{theorem} \label{theorem2}
Let Assumptions 1 and 2 hold. The universal estimator for DI from time-series $\mathbf{X}$ to $\mathbf{Y}$ is 
\begin{equation} \label{theorem2eq}
\hat{I}\left(\mathbf{X} \rightarrow \mathbf{Y}\right) = \hat{h}\left(\mathbf{Y}\right) - \hat{h}\left(\mathbf{Y}\|\mathbf{X}\right) \xrightarrow{a.s.} I\left(\mathbf{X} \rightarrow \mathbf{Y}\right).
\end{equation}
\end{theorem}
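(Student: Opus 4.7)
My plan is to obtain the claim as an almost immediate corollary of Theorem~\ref{theorem1}: I will apply that theorem twice, once in the bivariate form supplied and once in a univariate specialization, and then exploit the fact that almost sure convergence is preserved under finite linear combinations.

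The first application is direct. Under Assumptions~1 and~2 with Markov orders $(J_{yy},K_{yx})$, Theorem~\ref{theorem1} already yields $\hat{h}(\mathbf{Y}\|\mathbf{X})\xrightarrow{a.s.}h(\mathbf{Y}\|\mathbf{X})$. For the second application I would specialize the proof of Theorem~\ref{theorem1} to the single-process case by setting $K_{yx}=0$, so that the causal past reduces to $\mathbf{Y}_{n-J_{yy}^{\prime}}^{n-1}$ for some marginal Markov order $J_{yy}^{\prime}$. The restrictions of Assumptions~1 and~2 to $\mathbf{Y}$ alone (stationarity, ergodicity and Markovianity of $\mathbf{Y}$, together with $h(y_1),h(y_l|\mathbf{Y}_{l-J_{yy}^{\prime}}^{l-1})\in\mathbb{R}$) let the same SLLN-for-Markov-chains argument go through verbatim, yielding
\[\hat{h}(\mathbf{Y}) = \frac{1}{N}\sum_{n=1}^{N}\{-\log\mathrm{P}(y_n|\mathbf{Y}_{n-J_{yy}^{\prime}}^{n-1})\} \xrightarrow{a.s.} h(\mathbf{Y}).\]

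Because the exceptional null sets for the two applications of Theorem~\ref{theorem1} live on the same joint probability space, their union is still $P$-null, so subtraction of the two limits is legitimate:
\[\hat{I}(\mathbf{X}\rightarrow\mathbf{Y}) = \hat{h}(\mathbf{Y}) - \hat{h}(\mathbf{Y}\|\mathbf{X}) \xrightarrow{a.s.} h(\mathbf{Y}) - h(\mathbf{Y}\|\mathbf{X}),\]
and the right-hand side equals $I(\mathbf{X}\rightarrow\mathbf{Y})$ by the third line of \eqref{dir_inf_def2}. Lemma~\ref{lemma1} guarantees that each quantity in this identity is finite and real-valued, ruling out any $\infty-\infty$ indeterminacy.

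Honestly, there is no genuinely hard step here: the substance has already been paid for in Theorem~\ref{theorem1}, and what remains is short bookkeeping. The only points I would flag explicitly are (i) that the restriction of Assumption~1 to the marginal process $\mathbf{Y}$ still supplies the ergodicity needed by SLLN — this holds because a projection of an ergodic joint Markov chain is ergodic in its own filtration — and (ii) that the marginal Markov order $J_{yy}^{\prime}$ need not coincide with $J_{yy}$, so the two entropy estimators are free to use independently specified model orders without affecting the limit.
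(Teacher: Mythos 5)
Your argument is correct and follows essentially the same route as the paper: the paper's proof likewise applies Theorem~\ref{theorem1} twice (with the $K_{yx}=0$ specialization for $\hat{h}(\mathbf{Y})$, as noted in the text preceding the theorem) and subtracts the two almost surely convergent estimates. Your added remarks on the union of null sets, the finiteness guaranteed by Lemma~\ref{lemma1}, and the possibly distinct marginal order $J_{yy}^{\prime}$ are sound bookkeeping that the paper leaves implicit.
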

\begin{proof} We have from Theorem~\ref{theorem1} $ \hat{I}\left(\mathbf{X} \rightarrow \mathbf{Y}\right) = \hat{h}\left(\mathbf{Y}\right) - \hat{h}\left(\mathbf{Y}\|\mathbf{X}\right) \xrightarrow{a.s.}  h\left(\mathbf{Y}\right) - h\left(\mathbf{Y}\|\mathbf{X}\right) = I\left(\mathbf{X} \rightarrow \mathbf{Y}\right).$
\end{proof}

The DI estimator in Theorem~\ref{theorem2} can be easily extended to estimate the causally conditioned directed information, $I\left( \mathbf{X} \rightarrow \mathbf{Y} \| \mathbf{W}\right)$. \blue{First, $h\left(\mathbf{Y}\|\mathbf{W}\right)$ is estimated using Theorem~\ref{theorem1}. We now need to estimate $h\left(\mathbf{Y}\|\mathbf{W},\mathbf{X} \right)$. Let $J_{yy}$, $K_{yw}$ and $K_{yx}$  respectively  denote the number of past samples of $\mathbf{Y}$, $\mathbf{W}$ and $\mathbf{X}$ that influence the current sample of $\mathbf{Y}$. Let us also assume the causal conditional likelihood $\mathrm{P}\left(y_n|\mathbf{Y}^{n-1}_{n-J_{yy}},\mathbf{W}^{n}_{n-K_{yw}+1},\mathbf{X}^{n}_{n-K_{yx}+1}\right)$ is known. A model-based and a data-driven approach to estimate this CCL is described in the subsequent section. Then Theorem.~\ref{theorem1} can be easily extended to show that
\begin{align} \label{CCentropy_est_eq} 
\hat{h}& \left(\mathbf{Y} \| \mathbf{W},\mathbf{X} \right) \nonumber \\
=&\textstyle\frac{1}{N}\!\! \textstyle\sum\limits_{n=1}^{N} \!\! \left\{\!-\!\log \mathrm{P}\!\! \left(y_n|\mathbf{Y}^{n-1}_{n-J_{yy}},\!\! \mathbf{W}^{n}_{n-K_{yw}+1},\! \mathbf{X}^{n}_{n-K_{yx}+1}\right) \right\}
\end{align}
is an almost surely convergent estimate of $h\left(\mathbf{Y}\|\mathbf{W},\mathbf{X} \right)$. From \eqref{condDI_def}, $\hat{I}\left( \mathbf{X} \rightarrow \mathbf{Y} \| \mathbf{W}\right)$ is the difference between the estimates, $\hat{h}\left(\mathbf{Y}\|\mathbf{W}\right)$ and $\hat{h}\left(\mathbf{Y}\|\mathbf{W},\mathbf{X} \right)$.} It is important to note that  as the number of channels included in $\mathbf{W}$ increases, the computational  complexity of the estimator also  increases.

The DI estimate, $\hat{I}\left(\mathbf{X} \rightarrow \mathbf{Y} \right)$, can be interpreted as the amount of causal information $\mathbf{X}$ contains about $\mathbf{Y}$. It is, however, important to note that $\hat{I}\left(\mathbf{X} \rightarrow \mathbf{Y} \right)$ is estimated from $N$ samples and is an estimate of the true value of DI from $\mathbf{X}$ to $\mathbf{Y}$. The statistical significance of the causal connection from $\mathbf{X}$ to $\mathbf{Y}$ inferred from $\hat{I}\left(\mathbf{X} \rightarrow \mathbf{Y} \right)$ is calculated using an adaptation \cite{Diks2001} of stationary bootstrap \cite{Politis1994}. $B$ stationary bootstrap samples of $\mathbf{X}$, denoted by $\mathbf{X}{\left(b\right)}$, are generated using the algorithm described in \cite{Diks2001}  for $b=1,2,\cdots,B$. The DI from $b^{th}$ stationary bootstrap sample $\mathbf{X}{\left(b\right)}$ to $\mathbf{Y}$, denoted by $\hat{I}\left(\mathbf{X}{\left(b\right)} \rightarrow \mathbf{Y} \right)$, is estimated using the proposed DI estimator. Note that there is no causal  influence from any of these bootstrap samples to $\mathbf{Y}$ by construction. Therefore the $B$ samples, $\hat{I}\left(\mathbf{X}{\left(b\right)} \rightarrow \mathbf{Y} \right)$, for $b=1,2,\cdots,B$ are from the null hypothesis of no causal influence. The statistical significance is determined by the P-value \cite{shi2008}. P-value is the probability that DI estimate greater than or equal to $\hat{I}\left(\mathbf{X} \rightarrow \mathbf{Y} \right)$ can be observed under the null hypothesis of no causality from $\mathbf{X}$ to $\mathbf{Y}$ and is computed from the empirical distribution of $\hat{I}\left(\mathbf{X}{\left(b\right)} \rightarrow \mathbf{Y} \right)$ for $b=1,2,\cdots,B$. If the P-value is less than a predetermined significance level $\delta$, the null hypothesis of no causal connection from $\mathbf{X}$ to $\mathbf{Y}$ is rejected. On the other hand, if the P-value is greater than $\delta$, the null hypothesis cannot be rejected and the causal connection from $\mathbf{X}$ to $\mathbf{Y}$ is not statistically significant. Note that the empirical distribution of $\hat{I}\left(\mathbf{X}{\left(b\right)} \rightarrow \mathbf{Y} \right)$ is concentrated around $0$, since the DI between time-series that are not causally connected is zero. Therefore, when the actual DI estimate is large enough, the P-value will be less than $\delta$ and the statistical significance assessment is not required. However, statistical significance assessment is useful when the DI estimate is close to zero. The significance assessment described here is applied to the simulated examples in section~\ref{sec:DI_Simulated_Data} to identify the significant causal connections, particularly useful when the DI estimates are close to zero. The above discussion assumes CCL is known. The likelihood, however, must be estimated from data in practice. A model-based and a data-driven approach to estimate CCL is described in the following section.
\section{Estimating Causal Conditional Likelihood} \label{sec:est_causal_likelihood}
\blue{Estimating DI from $\mathbf{X}$ to $\mathbf{Y}$ using the proposed DI estimator in section~\ref{sec:DI_Est_Algo} requires estimating two CCLs, $\mathrm{P}\left(y_n|\mathbf{Y}^{n-1}\right)$ and  $\mathrm{P}\left(y_n|\mathbf{Y}^{n-1},\mathbf{X}^n\right)$, while estimating DI from $\mathbf{Y}$ to $\mathbf{X}$ causally conditioned on $\mathbf{W}$  requires estimating two CCLs, $\mathrm{P}\left(y_n|\mathbf{Y}^{n-1},\mathbf{W}^{n}\right)$ and $\mathrm{P}\left(y_n|\mathbf{Y}^{n-1},\mathbf{W}^n,\mathbf{X}^n\right)$}. Let us focus on estimating $\mathrm{P}\left(y_n|\mathbf{Y}^{n-1},\mathbf{X}^n \right)$ for $n=1,2,\cdots,N$, which is required to estimate $\hat{h}\left(\mathbf{Y}\|\mathbf{X}\right)$. \blue{We will then describe how to extend this approach to estimate $\mathrm{P}\left(y_n|\mathbf{Y}^{n-1},\mathbf{W}^n,\mathbf{X}^n\right)$.} The CCLs are estimated using either model-based or data-driven techniques. The choice between model-based and data-driven approaches is determined by the application from which data is recorded. For instance, the time-series signals obtained from electrophysiological recordings of brain  or from stock markets are commonly modeled using MVAR models with Gaussian white noise. In this case, the CCL is easily estimated from the MVAR model of the data. Usually the parameters of the model are unknown and several classical techniques to estimate the unknown parameters are described in \cite{kay2010}. On the other hand, using model-based approaches to estimate CCLs from data recorded from nonlinear systems or systems without a prescribed linear model is non-trivial. This is because estimating the CCLs using model-based approach requires essentially inverting the nonlinear generative model, which is not trivial. Data-driven approaches do not have this limitation and are therefore preferred for nonlinear time-series data. A good review of the various data-driven algorithms that estimate probability distribution from data is provided in \cite{izenman1991, scott2015}. The model-based and the data-driven CCL algorithm used in this paper are described in the remainder of this section. 

\subsection{Model-based CCL Estimation} \label{subsec:param_causal_likelihood}
We will focus on estimating the CCL specifically for multivariate autoregressive process with Gaussian white noise in this paper. Let the time-series $\mathbf{X}$ and $\mathbf{Y}$ be sampled from such processes. Then, the samples of $\mathbf{Y}$ can be expressed as
\begin{align}\label{mvareq}
y_n \! = \! \textstyle\sum\limits_{j=1}^{J_{yy}} \alpha_j y_{n-j} \! + \! \textstyle\sum\limits_{k=1}^{K_{yx}} \beta_k x_{n-k+1} \! + \! z_n, \! n = \!1,2,\!\cdots,\! N,
\end{align}
where $z_n$ is the additive white Gaussian noise with zero mean and variance $\sigma_z^2$. Here $\alpha_j \: \text{for} \: j=1,2,\cdots,J_{yy}$ and $\beta_k \: \text{for} \: k=1,2,\cdots,K_{yx}$ are the parameters of the model and $J_{yy}$ and $K_{yx}$ are the model orders representing how many past samples of $\mathbf{Y}$ and of $\mathbf{X}$ respectively influence the current sample of $\mathbf{Y}$. It is easy to observe from \eqref{mvareq} that 
\begin{equation}\label{param_causal_likelihood}
\mathrm{P}\!\left(\!y_n|\mathbf{Y}^{n-1}\!,\!\mathbf{X}^n \!\right)\!\! \sim \!\! \mathcal{N}\!\Big(\!\textstyle\sum\limits_{j=1}^{J_{yy}} \! \alpha_j y_{n-j}\! +\!\! \textstyle\sum\limits_{k=1}^{K_{yx}} \! \beta_k x_{n-k+1},\sigma_z^2\! \Big).
\end{equation}

The two model orders, $J_{yy}$ and $K_{yx}$, and the parameter vector $\theta\left(J_{yy},K_{yx}\right) = \big(\alpha_1,\cdots,\alpha_{J_{yy}},\beta_1,\cdots,\beta_{K_{yx}},\sigma_z^2\big)^{\text{T}}$ are not known apriori and need to be estimated from the $N$ observed samples of $\mathbf{X}$ and $\mathbf{Y}$. The parameters and the model orders are estimated using a maximum likelihood (ML) estimator with  minimum description length \cite{Grunwald2007} penalty. ML estimator is known to be asymptotically consistent. Minimum description length is a model order selection procedure with good consistency properties \cite{Grunwald2007} and proportional to $\left(J_{yy}+K_{yx}\right)$. The optimal model orders $\big(\hat{J}_{yy},\hat{K}_{yx}\big)$ are the solutions of the following  problem:
\begin{align}
\textstyle \big(\hat{J}_{yy},\hat{K}_{yx}\big) =   \underset{\!\! \left(J_{yy},K_{yx}\right)}{\arg\min} \big\{\!\!-\!\!\textstyle\frac{1}{N} \!\log& \mathrm{P}\big(\mathbf{Y}^N \| \mathbf{X}^N \!;\! \hat{\theta}\big(J_{yy},K_{yx}\big)\big) \nonumber \\ 
& + \textstyle\frac{J_{yy} + K_{yx}}{2N}\log N \!\! \big\}, \label{modelorderesteq}
\end{align}
where $\hat{\theta}\left(J_{yy},K_{yx}\right)$ is the value of $\theta$ which minimizes the negative log-likelihood for a given $\big(J_{yy},K_{yx}\big)$ and is obtained by solving
\begin{equation}
\hat{\theta}\left(J_{yy},K_{yx}\right)\!\! =\!\! \underset{ \theta}{\arg\min}  \textstyle \!-\!\frac{1}{N}\! \log \!\mathrm{P}\!\left(\mathbf{Y}^N \!\|\! \mathbf{X}^N\! ;\! \theta\left(J_{yy}\!,\!K_{yx}\right)\! \right). \label{MLesteq}
\end{equation}
The ML estimation of $\theta$ for a given $\left(J_{yy},K_{yx}\right)$ in \eqref{MLesteq} is equivalent to the ML estimation of the parameters of a standard linear regression model \cite{kay2010}, since the CCL is Gaussian distributed \eqref{param_causal_likelihood}. The estimated parameters almost surely converge to the true parameter values \cite{Barron1991} resulting in almost surely convergence of the proposed DI estimator. The desired CCL is obtained by substituting the solutions of \eqref{MLesteq}, \eqref{modelorderesteq} in \eqref{param_causal_likelihood}. 
The resultant CCL is then substituted in \eqref{theorem1eq} to estimate $\hat{h}\left(\mathbf{Y}\| \mathbf{X} \right)$, which is further simplified to $\hat{h}\left(\mathbf{Y} \| \mathbf{X} \right) = \frac{1}{2} \log \left(2\pi e \hat{\sigma}^2_z\right)$, where $\hat{\sigma}^2_z$ is the estimate of the noise variance from \eqref{modelorderesteq}, \eqref{MLesteq}.

\blue{The MVAR model-based CCL estimation algorithm described above can be easily extended to estimate the CCLs required to estimate the causal conditional DI, $\hat{I}\left(\mathbf{X} \rightarrow \mathbf{Y}\|\mathbf{W}\right)$. Let us focus on estimating $\mathrm{P}\left(y_n|\mathbf{Y}^{n-1},\mathbf{W}^{n},\mathbf{X}^{n}\right)$, which is required to estimate $\hat{h}\left(\mathbf{Y}\|\mathbf{W},\mathbf{X}\right)$. Assuming MVAR model, let $J_{yy}, K_{yw}, K_{yx}$ respectively denote the number of past samples of $\mathbf{Y}$,$\mathbf{W}$,
$\mathbf{X}$ that influence $y_n$. Then for $n = 1,2,\cdots, N$, the current sample of $\mathbf{Y}$ can be expressed as
\begin{align}\label{cond_mvareq}
y_n \! \!= \!\! \textstyle\sum\limits_{j=1}^{J_{yy}} \alpha_j y_{n-j} \!\! + \!\! \textstyle\sum\limits_{k=1}^{K_{yw}} \gamma_k w_{n-k+1} \!\! + \!\! \textstyle\sum\limits_{l=1}^{K_{yx}} \beta_l x_{n-l+1} \!\! +  \!\! z_n.
\end{align}
The only difference with \eqref{mvareq} are the extra terms of the time-series $\mathbf{W}$. As a result, the CCL will still be Gaussian distributed with same variance as the distribution in \eqref{param_causal_likelihood} and whose mean contains the extra terms corresponding to the samples of $\mathbf{W}$. The unknown parameters under this model are $\alpha_j$,$\gamma_k$,$\beta_l$ for $j\!\!=\!\!1,\cdots,J_{yy}$, $k=1,\cdots,K_{yw}$,$l=1,\cdots,K_{yx}$ and the model orders $J_{yy}, K_{yw}, K_{yx}$. Maximum likelihood with minimum description length penalty can be used to estimate these parameters similarly. The resulting parameter estimates can then be used to calculate the CCL, which is substituted in \eqref{CCentropy_est_eq} to estimate $\hat{h}\left(\mathbf{Y}\|\mathbf{W},\mathbf{X}\right)$.}

\subsection{Data-driven CCL Estimation} \label{subsec:nonparam_causal_likelihood}
Let $J_{yy}$ and $K_{yx}$ denote the number of past samples of $\mathbf{Y}$ and $\mathbf{X}$ that influence the current sample of $\mathbf{Y}$. Then the CCL $\mathrm{P}\left(y_n|\mathbf{Y}^{n-1},\mathbf{X}^n\right)$ is same as $\mathrm{P}\left(y_n|\mathbf{Y}_{n-J_{yy}}^{n-1},\mathbf{X}_{n-K_{yx}+1}^n\right)$ and can be written as
\begin{align}\label{nonparameq}
\mathrm{P}\!\!\left(\!y_n|\mathbf{Y}_{n-J_{yy}}^{n-1}\!,\!\mathbf{X}_{n-K_{yx}+1}^n\!\right) \! \!=\! \! \frac{\!\mathrm{P}\left(\mathbf{Y}_{n-J_{yy}}^{n},\mathbf{X}_{n-K_{yx}+1}^n\right)}{ \!\mathrm{P}\!\left(\!\mathbf{Y}_{n-J_{yy}}^{n-1},\mathbf{X}_{n-K_{yx}+1}^n\!\right)\!}.
\end{align}

The joint distribution $\textstyle \mathrm{P}\left(\mathbf{Y}_{n-J_{yy}}^{n},\mathbf{X}_{n-K_{yx}+1}^n\right)$ of $J_{yy}+1$ and $K_{yx}$ consecutive samples of $\mathbf{Y}$ and $\mathbf{X}$ respectively is learned using kernel density estimator \cite{izenman1991} with Gaussian kernels. This estimator is implemented in the `ks' package in R \cite{duong2007}. The true values of $\left(J_{yy},K_{yx}\right)$ are not known and should be estimated. The joint density is learned for different values of $J_{yy}$ and $K_{yx}$ and the optimal values $\big(\hat{J}_{yy},\hat{K}_{yx}\big)$ are those that maximize the likelihood. The desired CCL is then estimated by substituting  $\textstyle \mathrm{P}\left(\mathbf{Y}_{n-\hat{J}_{yy}}^{n},\mathbf{X}_{n-\hat{K}_{yx}+1}^n\right)$ in \eqref{nonparameq}. The denominator in \eqref{nonparameq} marginalizes the joint distribution in numerator of \eqref{nonparameq} over $y_n$. This marginalization is implemented by approximating the integral with a Riemann sum of the distribution over a partition of the range of $y_n$. Note that the convergence of the estimated CCL to the true CCL depends on the underlying true data distribution \cite{wied2012}. $\hat{h}\left(\mathbf{Y}\|\mathbf{X}\right)$ is obtained by substituting the estimated  CCL in \eqref{theorem1eq}. 

\blue{The data-driven CCL estimation algorithm described above can be extended to estimate $\mathrm{P}\left(y_n|\mathbf{Y}^{n-1},\mathbf{W}^{n},\mathbf{X}^{n}\right)$ as well. Let $J_{yy}, K_{yw}, K_{yx}$ respectively denote the number of past samples of $\mathbf{Y},\mathbf{W}, \mathbf{X}$ that influence $y_n$. Then
\begin{align}\label{condDI_nonparameq}
\mathrm{P}\!\!\left(\!y_n\!|\!\mathbf{Y}^{n-1}\!,\!\mathbf{W}^{n}\!,\!\mathbf{X}^{n}\!\right) \!\! = \!\! \frac{\mathrm{P}\!\left(\!\mathbf{Y}_{n-J_{yy}}^{n}\!,\!\mathbf{W}_{n-K_{yw}+1}^n\!,\!\mathbf{X}_{n-K_{yx}+1}^n\!\right)}{\mathrm{P}\!\left(\!\mathbf{Y}_{n-J_{yy}}^{n-1}\!,\!\mathbf{W}_{n-K_{yw}+1}^n\!,\!\mathbf{X}_{n-K_{yx}+1}^n\!\right)}.
\end{align}
The joint distribution in the numerator can be similarly estimated using kernel density estimator \cite{izenman1991} with Gaussian kernels using `ks' package \cite{duong2007}. Note that the optimal values of the model-orders $J_{yy}, K_{yw}, K_{yx}$ are those that maximize the likelihood. The denominator in \eqref{condDI_nonparameq} is then obtained by marginalizing the distribution in the numerator similarly. The resultant numerator and denominator probabilities are substituted in \eqref{condDI_nonparameq} to estimate $\mathrm{P}\left(y_n|\mathbf{Y}^{n-1},\mathbf{W}^{n},\mathbf{X}^{n}\right)$, which is further substituted in \eqref{CCentropy_est_eq} to estimate $\hat{h}\left(\mathbf{Y}\|\mathbf{X},\mathbf{W}\right)$.}

\blue{The model-based and data-driven CCL algorithms described above can be easily modified to estimate $\mathrm{P}\left(y_n|\mathbf{Y}^{n-1}\right)$, which is required to estimate $\hat{h}\left(\mathbf{Y}\right)$. $\mathrm{P}\left(y_n|\mathbf{Y}^{n-1}\right)$ is obtained from either model-based or data-driven CCL by modeling the dependence of the current sample of $\mathbf{Y}$ just on its own past samples. $I\left(\mathbf{X} \rightarrow \mathbf{Y}\right)$ and $I\left(\mathbf{X} \rightarrow \mathbf{Y}\|\mathbf{W}\right)$ can now be estimated using the estimator proposed in section~\ref{sec:DI_Est_Algo}.} 

The DI estimator obtained by using the proposed estimator in Theorem.~\ref{theorem2} with model-based CCL and data-driven CCL estimation algorithms will henceforth be referred to as model-based and data-driven DI estimator respectively. If data is assumed to be drawn from MVAR model with Gaussian white noise, then model-based DI will be referred to as MVAR model-based DI estimator. Note that model-based approach is not restricted to just MVAR models, it is feasible for all those models from which we can  estimate the appropriate causal conditional likelihoods parametrically. We focused on MVAR with Gaussian white noise in this paper because ECoG is commonly modeled using this model in connectivity studies \cite{mierlo2014, Blinowska2011}. The performance of both the proposed DI estimators on simulated time-series data is demonstrated in the following section.
\section{Performance on Simulated Data} \label{sec:DI_Simulated_Data}
\blue{In this section, the performance of the proposed DI estimators is demonstrated using  simulated data generated from five models - two node bidirectional linear (section~\ref{subsec:linear_two_node}) and nonlinear (section~\ref{subsec:nonlinear_two_node}) causal network whose true connectivity is depicted in Fig.~\ref{Fig:Simulated_Network}a, a two node unidirectional noisy chaotic polynomial (section~\ref{subsec:chaotic_oscillator}) causal network whose true connectivity is shown in Fig.~\ref{Fig:Simulated_Network}b, four node linear (section~\ref{subsec:linear_four_node}) and nonlinear (section~\ref{subsec:nonlinear_four_node}) causal network whose true connectivity is depicted in Fig.~\ref{Fig:Simulated_Network}c.} A directed arrow in Fig.~\ref{Fig:Simulated_Network} represents a causal connection. \blue{The causal connection between two nodes, say from node $\mathbf{A}$ to $\mathbf{B}$ in Fig.~\ref{Fig:Simulated_Network}c, implies $I\left(\mathbf{A} \rightarrow \mathbf{B} \right) > 0$ or equivalently, that the past samples of $\mathbf{A}$ have some information about the current sample of $\mathbf{B}$}. We also compared the performance of the proposed DI estimators with the standard Granger causality (GC) \cite{Granger1969}. GC estimate is obtained from MVGC toolbox \cite{barnett2014}.  
Let us now describe the performance of the proposed DI estimators on the five models considered in detail.
\begin{figure}[!t]
\centering
\includegraphics[width=0.8\columnwidth]{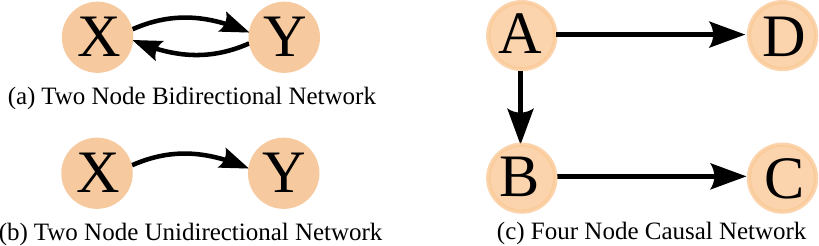}
\caption{The true causal connectivity graphs of the simulated data models used to validate the performance of the proposed model-based and data-driven DI estimators.}\label{Fig:Simulated_Network}
\squeezeup
\end{figure}

\subsection{Two Node Bidirectional Linear Causal Network}\label{subsec:linear_two_node}
Consider two time-series $\mathbf{X}$ and $\mathbf{Y}$ causally connected as shown in Fig.~\ref{Fig:Simulated_Network}a. The time-series $\mathbf{Y}$ is generated from 
\begin{equation} \label{linear_two_node_eq}
y_n = \beta_1 x_n + \beta_2 x_{n-1} + z_n, \: \text{for} \: n=1,2,\cdots,N,
\end{equation}
where $x_n$ and $z_n$ are sampled from an i.i.d Gaussian distribution with zero mean and variance $\sigma_x^2$, $\sigma_z^2$ respectively. The samples of $\mathbf{X}$ and $\mathbf{Z}$ are independent. The true value of the DI between $\mathbf{X}$ and $\mathbf{Y}$ in both directions is used to benchmark the performance of the proposed model-based and data-driven DI estimators. 

Let us first look at the true value of DI for the model  by \eqref{linear_two_node_eq} in two special cases. When $\beta_1=1, \beta_2=0$, \eqref{linear_two_node_eq} reduces to $y_n = x_n + z_n$, and it is obvious that both $\mathbf{X}$ and $\mathbf{Y}$ have equal causal information about each other. It is easy to see that $I\left(\mathbf{X} \rightarrow \mathbf{Y} \right) = I\left(\mathbf{Y} \rightarrow \mathbf{X} \right) = I\left(\mathbf{X} ; \mathbf{Y} \right) = C$, where $I\left(\mathbf{X};\mathbf{Y}\right)$ is the mutual information between $\mathbf{X}$ and $\mathbf{Y}$ and $C = \textstyle\frac{1}{2} \log \big(1+\textstyle\frac{\sigma_x^2}{\sigma_z^2} \big)$. The other special case occurs when $\beta_1=0,\beta_2=1$ and in this case \eqref{linear_two_node_eq} reduces to $y_n = x_{n-1}+z_n$. In this case, $\mathbf{X}$ has causal information about $\mathbf{Y}$, while $\mathbf{Y}$ has no causal information about $\mathbf{X}$. More precisely, $I\left(\mathbf{X} \rightarrow \mathbf{Y} \right) = I\left(\mathbf{X} ; \mathbf{Y} \right) = C$ and $I\left(\mathbf{Y} \rightarrow \mathbf{X} \right) = 0$. For the remaining case of non-zero $\beta_1, \beta_2$, the analytical expressions for DI are
\begin{align}\label{DIXtoY_YtoXeq}
I\big(\mathbf{X} \rightarrow \mathbf{Y}\big) \!\! = \!\!  \frac{1}{2}\!\! \log \!\!\left(\!\frac{|\beta_1 \beta_2|\sigma_x^2 }{\sigma_z^2} \!\right) \!\! & +\!\! \frac{1}{2} \!\!\cosh^{-1}\!\!\left(\!\!\frac{\left(\beta_1^2 + \beta_2^2\right) \sigma_x^2 + \sigma_z^2}{2|\beta_1 \beta_2| \sigma_x^2}\!\! \right)\!\!, \nonumber \\ 
I\big(\mathbf{Y} \rightarrow \mathbf{X}\big) =  \frac{1}{2} \log &\left(1+ \frac{\beta_1^2 \sigma_x^2 }{\sigma_z^2}\right)\!.
\end{align}
The derivation of \eqref{DIXtoY_YtoXeq} uses the tridiagonal matrix determinant from \cite{Hu1996} and is given in  Appendix~\ref{AppendixC}. Note from \eqref{DIXtoY_YtoXeq} that DI from $\mathbf{Y}$ to $\mathbf{X}$ does not depend on $\beta_2$. It is because the uncertainty in the current sample of $\mathbf{X}$ does not depend on $\beta_2$, when causally conditioned on the past of $\mathbf{X}$ and $\mathbf{Y}$.

\begin{figure}[!t]
\centering
\subfloat[$\beta_2 =\! \beta_1$]{
\includegraphics[width=0.45\columnwidth]{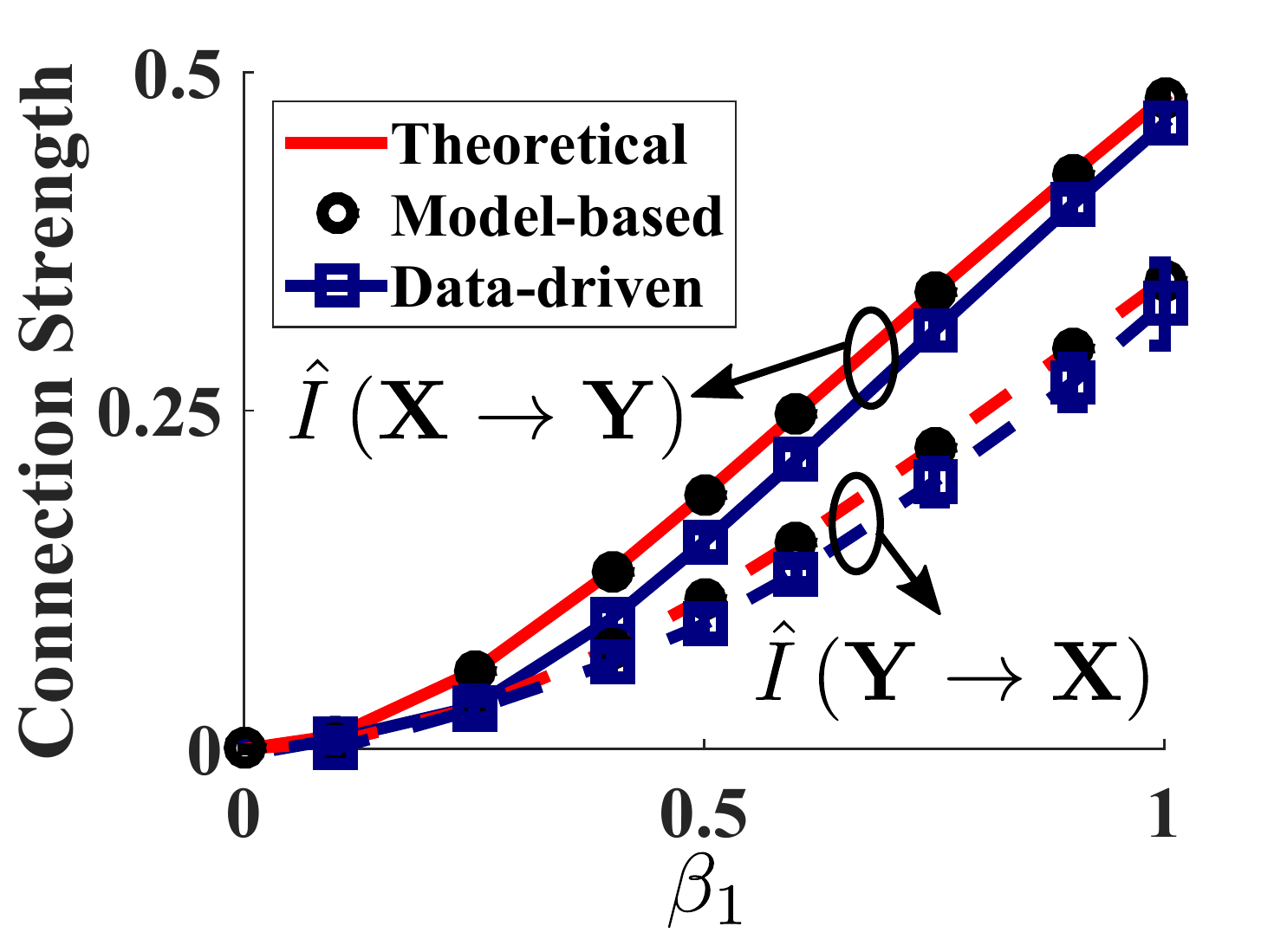}}
\hfill \vrule \hfill
\subfloat[$\beta_2 =\! 1 - \beta_1$]{
\includegraphics[width=0.45\columnwidth]{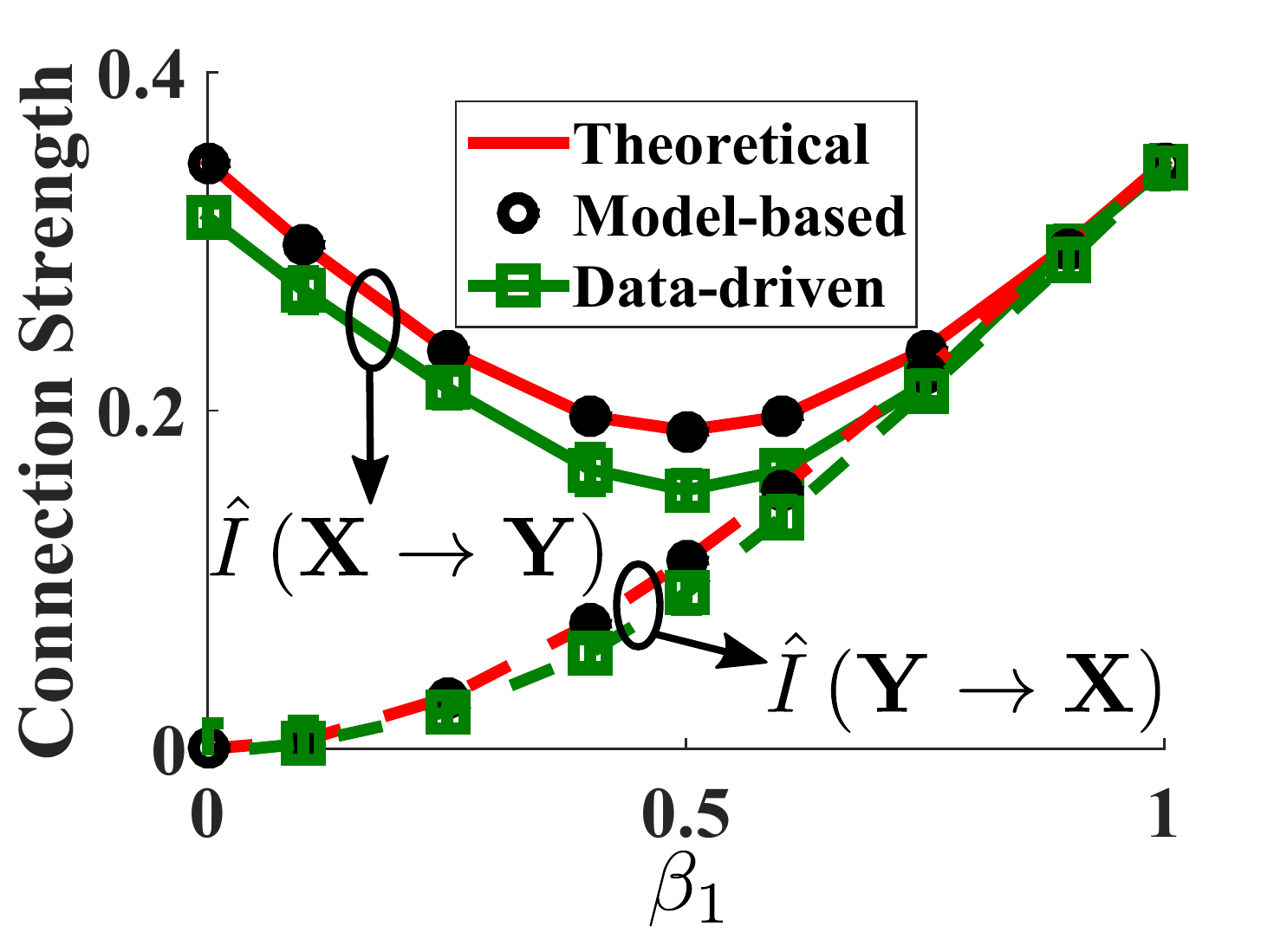} }
\caption{\blue{DI estimates and their standard deviation for the two node network (in Fig.~\ref{Fig:Simulated_Network}a) generated from a linear model \eqref{linear_two_node_eq} using analytical expression \eqref{DIXtoY_YtoXeq}, proposed model-based and data-driven DI estimators for different values of causal strength quantified by $\left(\beta_1,\beta_2\right)$. The DI estimates are plotted against $\beta_1$ with $\beta_2 = \beta_1$ in Fig.~\ref{Fig:Linear_Two_Node}a and with $\beta_2 = 1 - \beta_1$ in Fig.~\ref{Fig:Linear_Two_Node}b. 
}}\label{Fig:Linear_Two_Node}
\squeezeup
\end{figure}

The DI from $\mathbf{X}$ to $\mathbf{Y}$ and vice versa is estimated from $N = 10^5$ samples of $\mathbf{X}$ and $\mathbf{Y}$ generated with $\sigma_x^2 = 1$, $\sigma_z^2 = 1$ using the proposed model-based and data-driven DI estimators. The model-based DI estimator assumes that the time-series are modeled by a MVAR model with Gaussian white noise, whereas the data-driven CCL estimator does not impose any model assumptions on the data. Assuming $\mathbf{X}$, $\mathbf{Y}$ are from a MVAR process and when $x_n$ is included in the past samples of $\mathbf{X}$, Granger causality estimate from $\mathbf{X}$ to $\mathbf{Y}$ is equal to twice the MVAR model-based DI estimate from $\mathbf{X}$ to $\mathbf{Y}$ and vice versa \cite{Amblard2011}. We therefore do not show the GC estimates for linear MVAR models with Gaussian white noise. GC estimates are plotted only for nonlinear simulated models in this paper.

Fig.~\ref{Fig:Linear_Two_Node} plots directed information values obtained from the analytical expression in \eqref{DIXtoY_YtoXeq}, $\hat{I}\left(\mathbf{X} \rightarrow \mathbf{Y}\right)$ and $\hat{I}\left(\mathbf{Y} \rightarrow \mathbf{X}\right)$ from the proposed model-based and data-driven DI estimators for different values of $\beta_1 \in \left(0,1\right)$. The corresponding curves are respectively referred to as theoretical, model-based and data-driven. \blue{For the model-based and data-driven curves in Fig.~\ref{Fig:Linear_Two_Node}, multiple datasets of $\mathbf{X}$, $\mathbf{Y}$ are generated using different seeds for the random number generator. The mean and the standard deviation of the resultant estimates are plotted in Fig.~\ref{Fig:Linear_Two_Node}. The average standard deviation across all $\left(\beta_1,\beta_2 \right)$ in Fig.~\ref{Fig:Linear_Two_Node} is about $0.003$ and $0.01$ for the model-based and data-driven DI estimators respectively.} $\beta_2 = \beta_1$ in Fig.~\ref{Fig:Linear_Two_Node}a and $\beta_2 = 1-\beta_1$ in Fig.~\ref{Fig:Linear_Two_Node}b.

When $\beta_1=\beta_2$, a larger $\beta_1$ implies a stronger causal connection between $\mathbf{X}$ and $\mathbf{Y}$ and this should result in a larger DI. This expected trend is observed in Fig.~\ref{Fig:Linear_Two_Node}a. This implies that DI tracks the strength of the causal connection. Also in the corner case of $\beta_1 = \beta_2 = 0$, DI is zero in both directions as expected. In Fig.~\ref{Fig:Linear_Two_Node}b, DI estimates in the corner cases of $\beta_1 = 0$, $\beta_2 = 1$ and $\beta_1 = 1$, $\beta_2=0$ match with the analytical expression as expected. Also as $\beta_1$ increases from $0$ to $1$, the causal information $\mathbf{Y}$ has about $\mathbf{X}$ increases, and DI tracks this. This is demonstrated by observing that $\hat{I}\left(\mathbf{Y} \rightarrow \mathbf{X} \right)$ increases with $\beta_1$ in Fig.~\ref{Fig:Linear_Two_Node}b. Finally, it is clear from Fig.~\ref{Fig:Linear_Two_Node} that the model-based estimate matches the correct value of DI estimate from \eqref{DIXtoY_YtoXeq} and the data-driven estimator follows the true value of DI. This validates the accuracy of the proposed DI estimators. For this MVAR model with Gaussian white noise, the model-based DI estimator clearly performs better than the data-driven DI estimator and also has a lower run-time. We therefore use the MVAR model-based estimator to estimate DI between data modeled by MVAR processes with Gaussian white noise, instead of using the data-driven estimator. 

The adaptation of stationary bootstrap algorithm described earlier is used to assess the significance of the inferred causal connections for different values of  $\left(\beta_1,\beta_2\right)$. We observed that the null hypothesis of no causality from $\mathbf{Y}$ to $\mathbf{X}$ cannot be rejected for $\beta_1 \in \{0,0.1\}$ (P-value $>\delta = 0.05$) and can be rejected at all other points (P-value $< \delta$) in Fig.~\ref{Fig:Linear_Two_Node}. This is not surprising since $\hat{I}\left(\mathbf{Y} \rightarrow \mathbf{X} \right)$ is small for $\beta_1 \in \{0,0.1\}$ and hence did not result in a significant causal connection from $\mathbf{Y}$ to $\mathbf{X}$. Similarly, we observed that statistically significant causal connection from $\mathbf{X}$ to $\mathbf{Y}$ does not exist for $\beta_1=0, \beta_2=0$ (P-value $> \delta$) and exits at all other points (P-value $< \delta$) in Fig.~\ref{Fig:Linear_Two_Node}. This once again confirms our intuition that only large positive values of DI imply a statistically significant causal connection. 

\subsection{Two Node Bidirectional Nonlinear Causal Network} \label{subsec:nonlinear_two_node}
Now, consider time-series $\mathbf{X}$ and $\mathbf{Y}$ causally connected as shown in Fig.~\ref{Fig:Simulated_Network}a and are generated according to
\begin{equation} \label{nonlinearmodeleq}
y_n = \beta_1 x_n^2 + \beta_2 x_{n-1}^2 + z_n, \: \text{for} \: n = 1,2,\cdots,N,
\end{equation}
where $x_n$ and $z_n$ are sampled from an i.i.d Gaussian distribution with zero mean and variance $\sigma_x^2$, $\sigma_z^2$ respectively. Also, the samples of $\mathbf{X}$ and $\mathbf{Z}$ are independent. It is very non-trivial to estimate $\hat{I}\left(\mathbf{X}\rightarrow\mathbf{Y}\right)$ and $\hat{I}\left(\mathbf{Y} \rightarrow \mathbf{X} \right)$ using model-based DI estimator. This is because estimating $p\left(x_n|\mathbf{X}_1^{n-1},\mathbf{Y}_1^n\right)$  and $p\left(y_n|\mathbf{Y}_1^{n-1} \right)$ requires essentially inverting the non-linear, non-Gaussian generative model in \eqref{nonlinearmodeleq} and this is very hard even for this simple nonlinear model. These two probability densities are required to estimate $\hat{h}\left(\mathbf{X}\|\mathbf{Y} \right)$ and $\hat{h}\left(\mathbf{Y}\right)$ respectively. Therefore we only use the proposed data-driven DI estimator to estimate the DI from $\mathbf{X}$ to $\mathbf{Y}$ and vice versa. However, we can always assume that the data from the model in \eqref{nonlinearmodeleq} comes from a MVAR model with Gaussian noise, which is incorrect and estimate DI using the proposed MVAR model-based DI estimator. The resulting DI estimate will be half of the Granger causality estimate between these two time-series, $\hat{GC}\left(\mathbf{X}\rightarrow\mathbf{Y}\right)$ and $\hat{GC}\left(\mathbf{Y} \rightarrow \mathbf{X} \right)$. Note that GC also  assumes the data is generated from a MVAR process even though it is incorrect. We will now compare the performance of data-driven DI and GC estimates on this model.

Directed information and Granger causality between $\mathbf{X}$ and $\mathbf{Y}$ in both directions is estimated from $N=10^5$ samples generated with $\sigma_x^2 = 1$, $\sigma_z^2 = 1$ for different values of $\left(\beta_1,\beta_2\right)$ and plotted in Fig.~\ref{Fig:Nonlinear_Two_Node}. The DI and GC estimates are plotted for $\beta_2=\beta_1$ and $\beta_2 = 1 - \beta_1$ in Fig.~\ref{Fig:Nonlinear_Two_Node}a and Fig.~\ref{Fig:Nonlinear_Two_Node}b respectively. \blue{For each $\left(\beta_1, \beta_2\right)$, multiple datasets of $\mathbf{X}, \mathbf{Y}$ are generated with different random number generator seeds. The mean and the standard deviation of the resultant data-driven DI and GC  estimates are plotted in Fig.~\ref{Fig:Nonlinear_Two_Node}. The average standard deviation across all $\left(\beta_1, \beta_2\right)$ of the data-driven DI and GC estimates is $0.01$ and $1.8\times 10^{-5}$ respectively. In addition, the search space of the model order used by the Granger causality estimator is up to 20, i.e, $J_{yy},K_{yx} \in \left[1,20\right]$.} 
In Fig.~\ref{Fig:Nonlinear_Two_Node}a, $\hat{I}\left(\mathbf{X} \rightarrow \mathbf{Y}\right)$ increases with $\beta_1$ as expected. DI estimates also behave as expected in the corner cases of $\left(\beta_1,\beta_2\right) = \left(0,1\right) \: \text{and} \: \left(1,0\right)$ in Fig.~\ref{Fig:Nonlinear_Two_Node}b. $\hat{I}\left(\mathbf{Y} \rightarrow \mathbf{X} \right)$ increases with $\beta_1$ as expected. This once again demonstrates that DI tracks the strength of causal connections. On the other hand, Granger causality estimates in both directions are almost zero (of the order of $10^{-5}$), indicating that Granger causality cannot detect the causal connections in nonlinear models.

\begin{figure}
\centering
\subfloat[$\beta_2 =\! \beta_1$]{
\includegraphics[width=0.45\columnwidth]{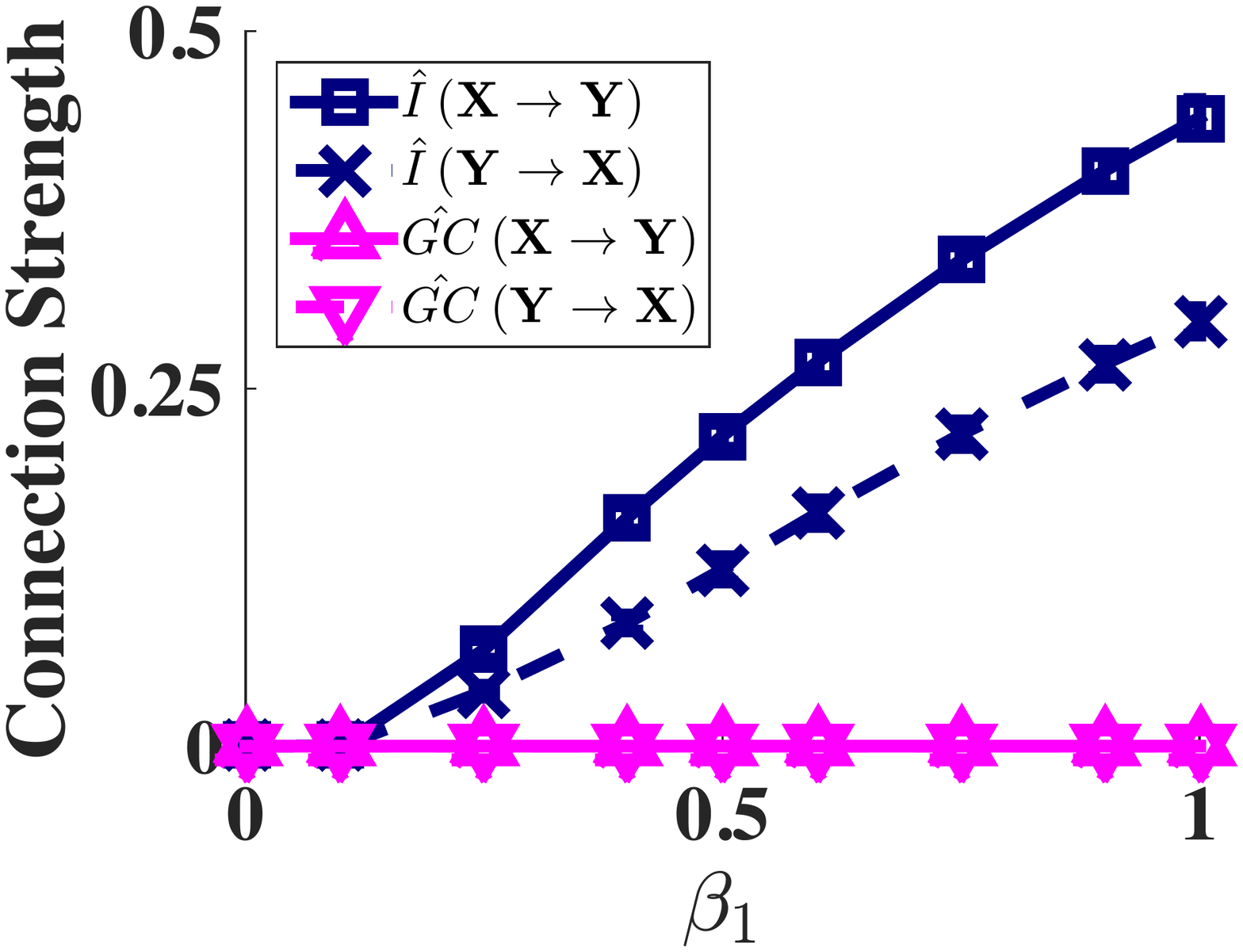}}
\hfill \vrule \hfill
\subfloat[$\beta_2 =\! 1 - \beta_1$]{
\includegraphics[width=0.45\columnwidth]{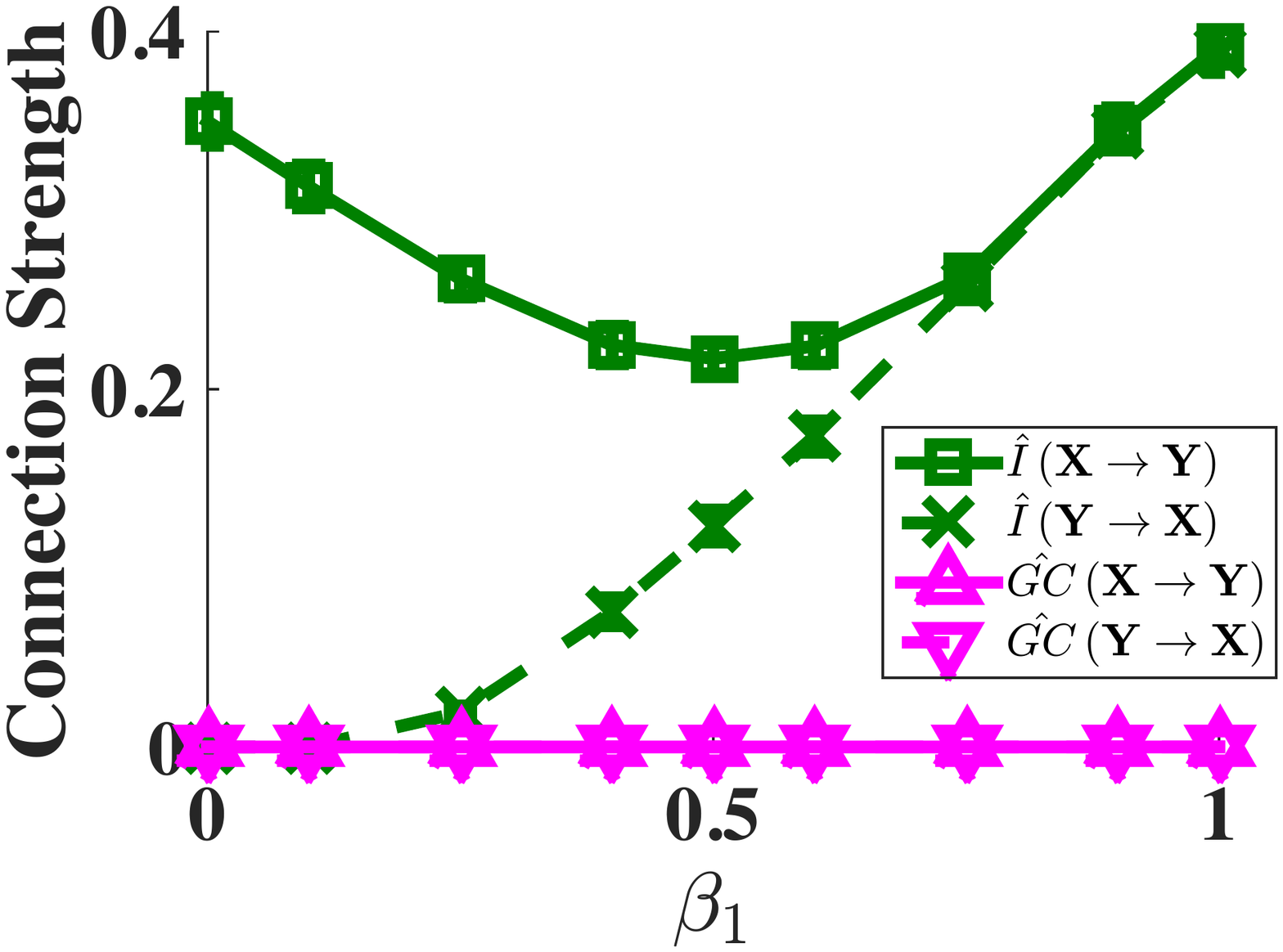} }
\caption{\blue{Data-driven DI and GC estimates, along with standard deviation of the estimates,  for the two node network (depicted in Fig.~\ref{Fig:Simulated_Network}a) generated from the nonlinear model \eqref{nonlinearmodeleq} for different values of causal strength quantified by $\left(\beta_1,\beta_2\right)$. The estimates are plotted against $\beta_1$ with $\beta_2 = \beta_1$ in Fig.~\ref{Fig:Nonlinear_Two_Node}a and with $\beta_2 = 1 - \beta_1$ in Fig.~\ref{Fig:Nonlinear_Two_Node}b. 
}}\label{Fig:Nonlinear_Two_Node}
\squeezeup
\end{figure}

The statistical significance of the inferred causal connections by DI and GC estimates for different values of  $\left(\beta_1,\beta_2\right)$ in Fig.~\ref{Fig:Nonlinear_Two_Node} is assessed using the stationary bootstrap algorithm described in section~\ref{sec:DI_Est_Algo}. Using DI, the null hypothesis of no causality from $\mathbf{Y}$ to $\mathbf{X}$ cannot be rejected for $\left(\beta_1,\beta_2\right) \in \{\left(0,0\right),\left(0,1\right)\left(0.1,0.1\right),\left(0.1,0.9\right)\}$ and from $\mathbf{X}$ to $\mathbf{Y}$ cannot be rejected for $\left(\beta_1,\beta_2\right) =\left(0,0\right)$ (P-value $> \delta = 0.05$) in Fig.~\ref{Fig:Nonlinear_Two_Node}. At all other points in Fig.~\ref{Fig:Nonlinear_Two_Node}, the null hypothesis of no causality can be rejected (P-value $< \delta$) using DI estimates. This once again confirms our intuition that large values of DI imply a statistically significant causal connection. For GC, the null hypothesis of no causality cannot be rejected at all points in Fig.~\ref{Fig:Nonlinear_Two_Node} implying that GC could not find statistically significant causal connections in nonlinear models. This example proves that DI is a more general causal connectivity metric that is not restricted to some particular models.

\begin{figure*}[!t]
\centering
\begin{minipage}[b]{0.24\textwidth}
\centering
\includegraphics[width = \textwidth]{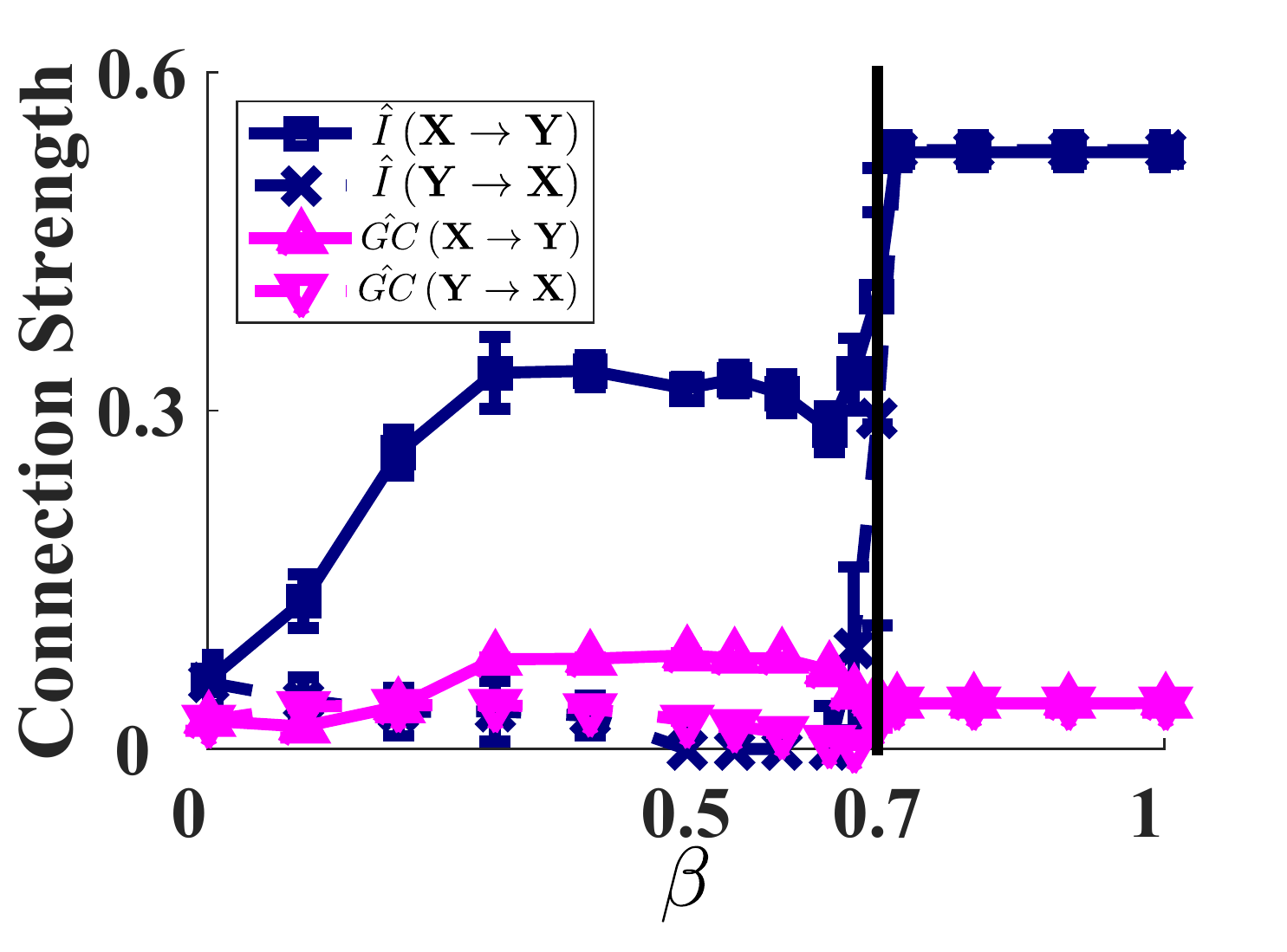}
\caption{\blue{Data-driven DI estimates and GC estimates, along with standard deviation of the estimates, for two node unidirectional network in Fig.~\ref{Fig:Simulated_Network}b generated from noisy chaotic polynomial map \eqref{chaotic_oscill} for different values of the coupling parameter $\beta$. 
}} 
\label{Fig:noisy_chaotic_poly_map}
\end{minipage}
\hfill \vrule \hfill
\begin{minipage}[b]{0.37\textwidth}
\centering
\subfloat[\blue{Model-based Estimator}]{
\includegraphics[width=0.45\columnwidth]{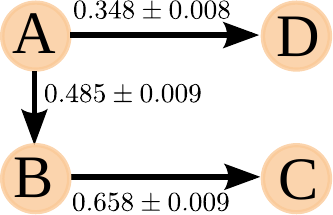}}
\hfill \vrule \hfill
\subfloat[\blue{Data-driven Estimator}]{
\includegraphics[width=0.45\columnwidth]{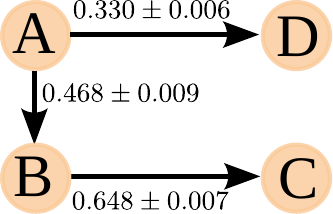} }
\caption{\blue{The causal network along with connection strengths between the four MVAR processes simulated from \eqref{linear_four_node_eq} estimated by the MVAR model-based DI and the data-driven DI estimators. The true causal connectivity graph between these four time-series is depicted in Fig.~\ref{Fig:Simulated_Network}c. It is clear that both DI estimators correctly infer the underlying causal network.}}\label{Fig:Linear_Four_Node}
\end{minipage}
\hfill \vrule \hfill
\begin{minipage}[b]{0.37\textwidth}
\centering
\subfloat[\blue{Model-based Estimator}]{
\includegraphics[width=0.45\columnwidth]{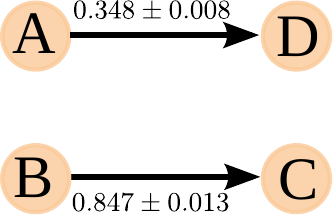}}
\hfill \vrule \hfill
\subfloat[\blue{Data-driven Estimator}]{
\includegraphics[width=0.45\columnwidth]{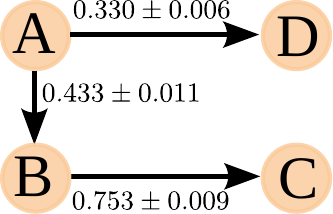} }
\caption{\blue{The causal network along with connection strengths between the four time-series simulated from \eqref{nonlinear_four_node_eq} estimated by the MVAR model-based DI and the data-driven DI estimators. The true causal connectivity graph between these four time-series is depicted in Fig.~\ref{Fig:Simulated_Network}c. It is clear that 
unlike MVAR model-based estimator, the data-driven estimator correctly infers the underlying causal connectivity graph.}}\label{Fig:Nonlinear_Four_Node}
\end{minipage}
\squeezeup
\end{figure*}

\subsection{Two Node Unidirectional Noisy Chaotic Polynomial Map}\label{subsec:chaotic_oscillator}
We now consider two unidirectionally coupled time-series $\mathbf{X}$ and $\mathbf{Y}$ whose underlying causal connectivity is shown in Fig.~\ref{Fig:Simulated_Network}b. The time-series $\mathbf{X}$ and $\mathbf{Y}$ are generated from a noisy chaotic polynomial map \cite{ishiguro2008} according to
\begin{align} \label{chaotic_oscill}
& x_n = 1.4 - x_{n-1}^2 + 0.3x_{n-2}, \nonumber \\
y_n = 1.4 - & \left(\beta x_{n-1} + \left(1- \beta \right)y_{n-1}\right)y_{n-1} + 0.3y_{n-2},
\end{align}
where $\beta$ controls the amount of causal information flowing from $\mathbf{X}$ to $\mathbf{Y}$. The initial two samples, $x_1,x_2,y_1,y_2$ are randomly chosen. The two time-series become completely synchronized for $\beta>0.7$. Gaussian i.i.d measurement noise of variance $0.01$ is added to both time-series $\mathbf{X}$ and $\mathbf{Y}$. For $\beta \in \left[0,0,7\right)$, strength of the causal connection from $\mathbf{X}$ to $\mathbf{Y}$ should increase with $\beta$ and there is no causal connection from $\mathbf{Y}$ to $\mathbf{X}$. For $\beta \in \left(0.7,1\right]$, since both time-series are completely synchronized and because of the measurement noise, there is a non-zero equally strong causal connection in both directions. In the absence of measurement noise for $\beta \in \left(0.7,1\right)$, $x_n=y_n$ leading to causal conditional entropy estimate of negative infinity and a DI estimate of infinity. The intuition behind this is that once the past of $\mathbf{X}$ is known, there is no uncertainty left in $\mathbf{Y}$. On the other hand, GC  estimates in the synchronized range will be close to zero because the past of $\mathbf{X}$ used by GC (unlike DI, GC does not include $x_n$ in the past of $\mathbf{X}$) does not contain any predictive information about $y_n$ resulting in a GC estimate of zero from $\mathbf{X}$ to $\mathbf{Y}$. Note that it is very non-trivial to apply model-based DI on this model because of the same reasons outlined in the previous simulated nonlinear model. We therefore only compare the performance of data-driven DI and GC estimates on this model.

DI and GC in both directions is estimated from $N=10^5$ samples of $\mathbf{X}$ and $\mathbf{Y}$ (after discarding the initial transient points) for different values of $\beta \in \left[0,1\right]$ and plotted in Fig.~\ref{Fig:noisy_chaotic_poly_map}. \blue{For each $\beta$, the time-series are generated from \eqref{chaotic_oscill} using different seeds of the random number generator. The mean and the standard deviation of the resulting data-driven DI and GC estimates are plotted in Fig.~\ref{Fig:noisy_chaotic_poly_map}. The average standard deviation across all $\beta$ for the data-driven DI and GC estimates is $0.03$ and $0.001$ respectively. The standard deviation was largest at $\beta = 0.7$, implying that it is very hard to estimate at the boundary before and after complete synchronization. In addition, the search space of the model order used by the Granger causality estimator is up to 20, i.e, $J_{yy},K_{yx} \in \left[1,20\right]$.} The DI estimate is obtained by subtracting two non-negative numbers and it can sometimes be a small negative number because of the inaccuracies in estimation algorithms or insufficient data or violation of stationarity assumptions \cite{Weissman2013a} and in those cases, we reset the DI estimate to be zero. For instance, the largest negative DI estimate we obtained for this model is $-0.06$ from $\mathbf{Y}$ to $\mathbf{X}$ at $\beta=0.6$ and we reset this estimate to $0$. It is clear from Fig.~\ref{Fig:noisy_chaotic_poly_map} that DI estimates behave as expected. DI from $\mathbf{X}$ to $\mathbf{Y}$ increase as $\beta$ goes from $0$ to $1$. On the other hand, the DI estimates from $\mathbf{Y}$ to $\mathbf{X}$ are very small numbers for $\beta<0.7$ and then there is a sudden jump in this estimate after $\beta>0.7$. This jump is because the time-series get synchronized for $\beta>0.7$. On the other hand, GC estimates in both directions are small positive numbers (when compared to DI estimates) for the whole range and become equal in value in the synchronized range of $\beta>0.7$. 

The statistical significance of the causal connections inferred by DI and GC estimates is assessed using the adaption of stationary bootstrap. The null hypothesis of no causality using DI estimates from $\mathbf{Y}$ to $\mathbf{X}$ cannot be rejected for $\beta<0.7$ and cannot be rejected for the connection from $\mathbf{X}$ to $\mathbf{Y}$ for $\beta<0.1$. This implies DI correctly identifies the presence of causal connection from $\mathbf{X}$ to $\mathbf{Y}$ for all $\beta \geq 0.1$ and the absence of causal connection from $\mathbf{Y}$ to $\mathbf{X}$ for $\beta<0.7$. It can also differentiate causally independent time-series ($\beta=0$) and completely identical time series ($\beta \in \left(0.7,1 \right]$). On the other hand, the null hypothesis of no causality cannot be rejected only for $\beta=0$ using GC estimates. This implies GC identifies the presence of a causal connection in both directions for all non-zero $\beta$, which is incorrect. This example also shows DI correctly infers causal connectivity from nonlinear models.

\subsection{\blue{Four Node Linear Causal Network}}\label{subsec:linear_four_node}
\blue{Now, consider the four node causal network depicted in Fig.~\ref{Fig:Simulated_Network}c. The four time series $\mathbf{A}$, $\mathbf{B}$, $\mathbf{C}$ and $\mathbf{D}$ are generated according to
\begin{eqnarray} \label{linear_four_node_eq}
b_n = a_{n-1} + a_{n-2} + z^{b}_n, & c_n = b_{n-1} + z^{c}_n,\nonumber \\
d_n = a_{n-2} + z^{d}_n, & \text{for} \quad n=1,2,\cdots,N, 
\end{eqnarray} 
where $a_n$, $z^{b}_n$, $z^{c}_n$ and $z^{d}_n$ are sampled from an i.i.d Gaussian distribution with zero mean and unit variance. In this network, $\mathbf{A}$ influences $\mathbf{C}$ indirectly through $\mathbf{B}$. This is an example of an `indirect' causal connection, in contrast with the connection from $\mathbf{A}$ to $\mathbf{B}$, which is a `direct' causal connection. DI estimate between pairs of time-series cannot differentiate between `direct' and `indirect' causal connections \cite{Quinn2011a}. For instance, the DI estimate from $\mathbf{A}$ to $\mathbf{C}$ is positive, even though $\mathbf{A}$ does not directly influence $\mathbf{C}$, but causally influences $\mathbf{C}$ via $\mathbf{B}$. A thorough discussion on the direct and indirect influences for point processes is in \cite{Quinn2011a} and is directly applicable here. Following the approach taken in \cite{Quinn2012, Amblard2011}, the `direct' causal influence from $\mathbf{A}$ to $\mathbf{C}$ is non-zero, if and only if $I\left(\mathbf{A}\rightarrow \mathbf{C} \| \mathbf{B}, \mathbf{D},\right) > 0$. However, estimating the causally conditioned DI when the number of channels recorded from is large (of the order of hundred's) is difficult because of the curse of dimensionality \cite{scott2015}. To overcome this, the pairwise DI is first estimated between all pairs of channels. The indirect influences are then resolved by first estimating only the required causal DI between two processes, conditioned on one more process. Then if required, the causal DI between two processes, conditioned on two more processes, is estimated and so on. The termination condition is determined by  the desired degree of `directness' in the inferred causal network. In this simulated example, we are interested in recovering the true `direct' causal network depicted in Fig.~\ref{Fig:Simulated_Network}c.}

To infer the true causal network, DI is estimated between these four time-series using both MVAR model-based and data-driven DI estimators. Model-based DI estimator assumes the data is generated from a linear causal MVAR model, whereas data-driven DI estimator does not impose any parametric model assumptions on the data. The data is generated from  \eqref{linear_four_node_eq} using $20$ different seeds to generate the Gaussian noise and the resultant estimates are averaged. We will first describe the performance using model-based DI estimator.

Model-based DI estimator is used to estimate the pairwise DI between all pairs of these four nodes, resulting a $4\times 4$ matrix with zeros on the diagonal. We found that $\hat{I}\left(\mathbf{A}\rightarrow \mathbf{B}\right)=0.485\pm0.009, \hat{I}\left(\mathbf{A}\rightarrow \mathbf{C}\right)=0.314\pm0.009 \: \text{and} \: \hat{I}\left(\mathbf{B}\rightarrow \mathbf{C}\right)=0.658\pm0.009$. To determine if there is an indirect causal connection from $\mathbf{A}$ to $\mathbf{C}$ or from  $\mathbf{B}$ to $\mathbf{C}$, we estimated $\hat{I}\left(\mathbf{A}\rightarrow \mathbf{C} \|\mathbf{B}\right)$ and $\hat{I}\left(\mathbf{B}\rightarrow \mathbf{C} \| \mathbf{A}\right)$ using the model-based causally conditioned DI estimator described in section~\ref{sec:DI_Est_Algo},~\ref{subsec:param_causal_likelihood}. We found that $\hat{I}\left(\mathbf{A}\rightarrow \mathbf{C} \|\mathbf{B}\right) = 0 $ and $\hat{I}\left(\mathbf{B}\rightarrow \mathbf{C} \| \mathbf{A}\right) = 0.344\pm0.009$. Therefore, $\mathbf{A}$ to $\mathbf{C}$ is an `indirect' connection via $\mathbf{B}$. Causally conditional DIs are estimated till the network is completely resolved and free of any indirect influences. The estimated causal network along with the strength and the standard deviation of the estimated causal connections is depicted in Fig.~\ref{Fig:Linear_Four_Node}a. It is clear from Fig.~\ref{Fig:Linear_Four_Node}a and Fig.~\ref{Fig:Simulated_Network}c that model-based DI estimator infers the true causal network correctly.

We now use the data-driven DI estimator to infer the true causal network. The pairwise DI is estimated between all pairs of these four nodes using the data-driven estimator, resulting in a $4\times 4$ matrix with zeros on the diagonal. Using this DI estimator, we find that $\hat{I}\left(\mathbf{A}\rightarrow \mathbf{B}\right)=0.468\pm0.009, \hat{I}\left(\mathbf{A}\rightarrow \mathbf{C}\right)=0.296\pm0.004 \: \text{and} \: \hat{I}\left(\mathbf{B}\rightarrow \mathbf{C}\right)=0.648\pm0.008$. To identify the presence of any indirect connections, we estimated $\hat{I}\left(\mathbf{A}\rightarrow \mathbf{C} \|\mathbf{B}\right)$ and $\hat{I}\left(\mathbf{B}\rightarrow \mathbf{C} \| \mathbf{A}\right)$ using the model-based causally conditioned DI estimator described in section~\ref{sec:DI_Est_Algo},~\ref{subsec:nonparam_causal_likelihood}. We found that $\hat{I}\left(\mathbf{A}\rightarrow \mathbf{C} \|\mathbf{B}\right) = 0 $ and $\hat{I}\left(\mathbf{B}\rightarrow \mathbf{C} \| \mathbf{A}\right) = 0.273\pm0.009$. Therefore, $\mathbf{A}$ to $\mathbf{C}$ is an `indirect' connection via $\mathbf{B}$. This procedure is continued to identify and remove all indirect causal connections. The resultant estimated direct causal network is depicted in Fig.~\ref{Fig:Linear_Four_Node}b. It is clear that data-driven DI also recovers the true network correctly. Moreover, it is clear from Fig.~\ref{Fig:Linear_Four_Node} that for this model, both model-based and data-driven DI estimators correctly infer the underlying causal network, which is not surprising since the underlying model is a linear MVAR model.

\subsection{\blue{Four Node Nonlinear Causal Network}} \label{subsec:nonlinear_four_node}
\blue{We now use a nonlinear model to generate the four time-series $\mathbf{A}$, $\mathbf{B}$, $\mathbf{C}$ and $\mathbf{D}$ whose underlying causal connectivity graph is depicted in Fig.~\ref{Fig:Simulated_Network}c. $N$ samples from the four time-series are generated according to
\begin{eqnarray} \label{nonlinear_four_node_eq}
b_n = a_{n-1}^2 + a_{n-2}^2 + z^{b}_n, & c_n = b_{n-1} + z^{c}_n,\nonumber \\
d_n = a_{n-2} + z^{d}_n, & \text{for} \quad n=1,2,\cdots,N,
\end{eqnarray} 
where $a_n$, $z^{b}_n$, $z^{c}_n$ and $z^{d}_n$ are sampled from an i.i.d Gaussian distribution with zero mean and unit variance. The only difference with the model in section~\ref{subsec:linear_four_node} is that the causal connection from $\mathbf{A}$ to $\mathbf{B}$ is now nonlinear.} 


\blue{First, we infer the true causal connectivity for this model using the MVAR model-based DI estimator. This DI estimator assumes that the data is drawn from a linear MVAR model, which is not true for this model. It is clear from \eqref{nonlinear_four_node_eq} that the time-series $\mathbf{B}$ is not generated from a linear MVAR model. Pairwise DI is estimated using this model between all pairs of these four time-series resulting in a $4\times 4$ matrix with zeros on the diagonal. The only significant causal connections estimated by the model-based DI estimator are from $\mathbf{B}$ to $\mathbf{C}$ and from $\mathbf{A}$ to $\mathbf{D}$. This process is repeated for data generated using 20 different seeds and the resultant DI estimates are averaged. We find that $\hat{I}\left(\mathbf{B} \rightarrow \mathbf{C}\right)=0.847 \pm 0.013$ and $\hat{I}\left(\mathbf{A} \rightarrow \mathbf{D}\right)=0.348 \pm 0.008$. It is also clear that there are no indirect connections to resolve in this case. The underlying causal connectivity graph estimated by the model-based DI estimator is depicted in Fig.~\ref{Fig:Nonlinear_Four_Node}a. It is clear from this figure that model-based DI estimator could not recover this true network correctly. This is not surprising since the MVAR model-based estimator can only identify linear causal connections and cannot identify the nonlinear causal connections. As result, the connection from $\mathbf{A}$ to $\mathbf{B}$ is not identified by the model-based DI estimator.}

\blue{We now use data-driven DI estimator to infer the causal connectivity from the simulated data. The pairwise DI is estimated between all pairs of these four nodes using the data-driven estimator, resulting in a $4\times 4$ matrix with zeros on the diagonal. In contrast to the model-based DI estimator, we find that DI from $\mathbf{A}$ to $\mathbf{B}$ estimated using data-driven DI is nonzero. Specifically, we find that $\hat{I}\left(\mathbf{A} \rightarrow \mathbf{B}\right)=0.433 \pm 0.011$. In addition, we also find that $\hat{I}\left(\mathbf{A}\rightarrow \mathbf{C}\right)=0.320 \pm 0.010 \: \text{and} \: \hat{I}\left(\mathbf{B}\rightarrow \mathbf{C}\right)=0.753 \pm 0.009$. To eliminate indirect causal connections, we estimated $\hat{I}\left(\mathbf{A}\rightarrow \mathbf{C} \|\mathbf{B}\right) = 0 $ and $\hat{I}\left(\mathbf{B}\rightarrow \mathbf{C} \| \mathbf{A}\right) = 0.262 \pm0.037$. Therefore, $\mathbf{A}$ to $\mathbf{C}$ is an `indirect' connection via $\mathbf{B}$. This procedure is continued to identify and remove all indirect causal connections. The resultant estimated direct causal network is depicted in Fig.~\ref{Fig:Nonlinear_Four_Node}b. It is clear that data-driven DI estimator recovers the true network correctly, while the model-based DI estimator could not infer the true causal network correctly.}

The \blue{five} diverse simulated models considered in this section demonstrate that the DI correctly infers the presence and tracks the strength of a causal connection - large values of DI imply a strong causal connection and vice versa. Using stationary bootstrap, we also showed that only large positive DI estimates correspond to statistically significant causal connections. \blue{We also observed that model-based DI estimator cannot identify nonlinear causal connections, whereas data-driven DI estimator can correctly identify both linear and nonlinear causal connections.} We now use both the MVAR model-based and data-driven DI estimators to infer the causal connectivity graph from ECoG data in epileptic patients. We only consider the large DI estimates (large compared to the rest of the causal connectivity graph) since they only imply a significant causal connection. We propose a model-based and a data-driven SOZ identification algorithm in the following section.
\section{Seizure Onset Zone Identification Algorithms}
Seizure onset zone (SOZ) is defined as the regions of the brain that initiate seizures \cite{Luders2006}. The current clinical standard is for neurologists to identify SOZ from visual analysis of the ECoG data. The SOZ identified in this way is removed during resective surgery. However, visual analysis is time consuming, subjective and potentially unreliable \cite{mierlo2013, panzica2013}. We propose two computationally derived SOZ identification algorithms - model-based and data-driven SOZ identification algorithms. We identified the SOZ in five patients with epilepsy using these two algorithms and compared their performance with visual analysis by the neurologist.

\begin{table}
\centering
\caption{Clinical Details of the Patients Analyzed.}
\begin{threeparttable}
\centering
\label{Table0}
\setlength{\tabcolsep}{1pt}
\begin{tabular}{|c|c|c|c|c|c|c|}
\hline
Patient ID & Age/Sex & \multicolumn{1}{c|}{Syndrome} & \multicolumn{1}{c|}{\makecell{Seizure \\ Type}} & \multicolumn{1}{c|}{\makecell{Electrode \\ Type}} & Surgery & \multicolumn{1}{c|}{\makecell{ Outcome \\ of Surgery}} \\
\hline
P1 & 20/M & \multicolumn{1}{c|}{\makecell{Nonlesional \\ temporal}} & CPS & D & \multicolumn{1}{c|}{\makecell{Right TL}} & Class I \\
\hline
P2 & 60/M & \multicolumn{1}{c|}{\makecell{Lesional \\ temporal}} & CPS & D & \multicolumn{1}{c|}{\makecell{Selective \\ Left HC}} & Class II \\
\hline
P3 & 29/M & \multicolumn{1}{c|}{\makecell{Nonlesional \\ temporal}} & CPS & G+D &\multicolumn{1}{c|}{\makecell{Right TL}} & Class II \\
\hline
P4 & 37/M & \multicolumn{1}{c|}{\makecell{Nonlesional \\ extratemporal}} & SPS+CPS & G &\multicolumn{1}{c|}{\makecell{Right OC}} & Class III \\
\hline
P5 & 20/F & \multicolumn{1}{c|}{\makecell{Lesional \\ temporal}} & CPS & G+D &\multicolumn{1}{c|}{\makecell{Left TL}} & Class I  \\
\hline
\end{tabular}
\begin{tablenotes}
\item CPS - complex partial seizures, SPS - simple partial seizures. D - depth electrodes, G - subdural grid electrodes. TL - temporal lobectomy, HC - hippocampectomy, OC - occipital corticosectomy. The outcomes are in Engel epilepsy surgery outcome scale. ``Class I - free of disabling seizures, class II - Almost seizure-free, class III - worthwhile improvement, class IV - no worthwhile improvement"\cite{tonini2004}.
\end{tablenotes}
\end{threeparttable}
\squeezeup
\end{table}

\subsection{Clinical ECoG Data}
The five patients analyzed here were all managed and treated by our physician coauthors. The clinical details of these patients are summarized in Table~\ref{Table0}. Three seizure records each from patients P1, P2 and P5, two from patient P3 and one from patient P4 were  analyzed. Each seizure record was approximately 10 minutes long and contained one seizure. Each seizure on average lasted for a minute and   was roughly in the middle of the seizure record. The seizure start time was identified by the neurologist. Each electrode records the voltage waveform at a sampling frequency of $1$ KHz. The number of electrodes in these five patients varied from $120$ to $150$. Electrodes with artifacts likely due to either loose contacts, patient movement or excessive line noise were not included in the analysis.

\subsection{Proposed SOZ Identification Algorithms}
The first stage of the proposed SOZ identification algorithm is an energy detector which selects only $M$ channels out of all ECoG channels for further analysis. The main objective of this stage is to reduce the computational complexity of the proposed algorithms. The energy is $l_2$-norm of the ECoG signal computed from a window around the start of seizures containing preictal and ictal recordings. Any channel  involved in seizure onset is expected to have interictal spikes before the seizure starts and/or have high amplitude low-frequency ictal activity once the seizure is fully developed, both of which will increase the energy in the selected time-window. The time-window was selected to be long enough to capture both spiking and large ECoG amplitudes during seizures. 
The second stage consisted of estimating the causal connectivity between every pair of $M$ channels selected in the first stage to form a $M \times M$ causal connectivity matrix. The causal connectivity was estimated from a shorter time-window around the seizure start time, since we are interested in estimating the seizure onset electrodes. The following subsections describe the remaining stages of the two proposed SOZ identification algorithms. 
\begin{figure}
\centering
\includegraphics[width=0.75\columnwidth]{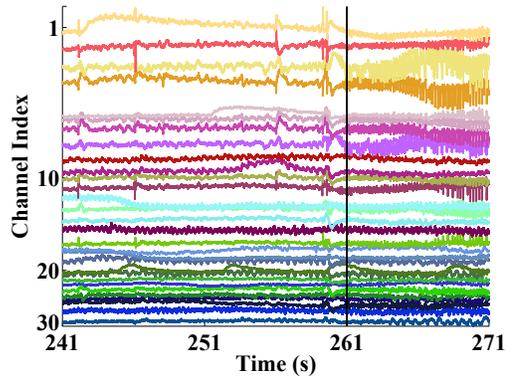}
\caption{A $30$s snapshot of ECoG signals from the $30$ high energy channels of P1. The seizure start time, represented by a vertical solid black line, is identified by neurologist. Causal connectivity is estimated from this entire $30$s window for this seizure record.}
\label{Fig:ECoG_Data}
\squeezeup
\end{figure}

\subsubsection{Model-based SOZ Identification Algorithm}
In this approach, ECoG data is assumed to be derived from a MVAR process with Gaussian white noise. This is a very common assumption imposed to estimate causal connectivity between ECoG data \cite{mierlo2014, Blinowska2011}. The MVAR model-based DI estimator is used to infer the causal connectivity between the selected $M$ high energy channels. The causal connectivity estimated using this approach only represents the linear causal interactions between the ECoG channels. However it is widely believed that seizures are highly non-linear  phenomenon during which SOZ drives the rest of the network into a hypersynchronous state \cite{lehnertz2008, Luders2006,rosenow2001}. As a result, we expect the seizure onset electrodes in the causal connectivity graph to be isolated, since model-based approach can only capture linear causal interactions. The proposed model-based algorithm therefore identifies the nodes in the causal connectivity graph with zero degree (threshold was set to select only the strongest $10\%$ connections) as the estimated SOZ. If a patient had multiple seizures, the electrodes identified across all seizures in that patient form the estimated SOZ for that patient.

\begin{figure}
\subfloat[From model-based algorithm]{
\centering
\includegraphics[width=0.48\columnwidth]{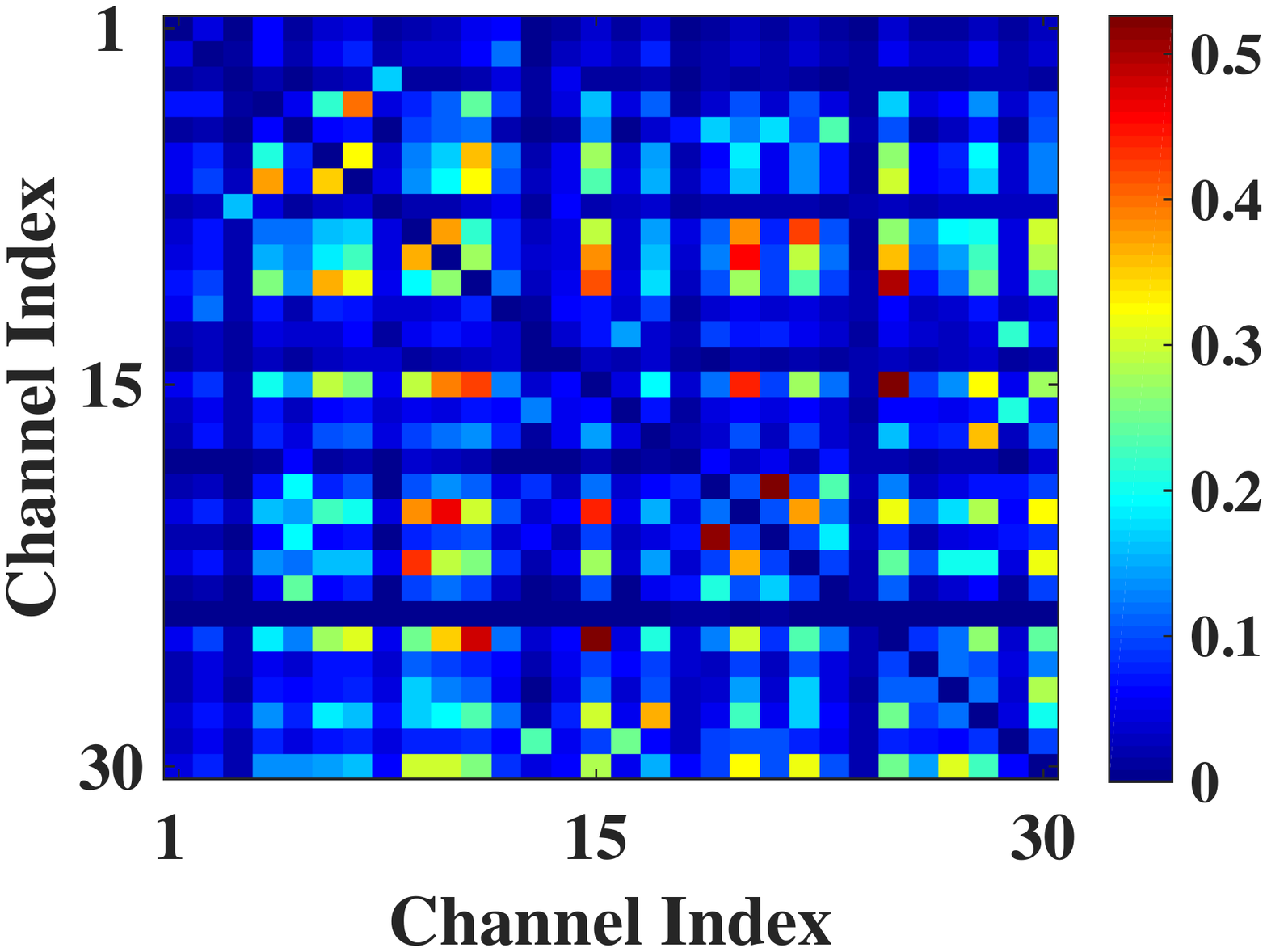}}
\subfloat[From data-driven algorithm]{
\centering
\includegraphics[width=0.48\columnwidth]{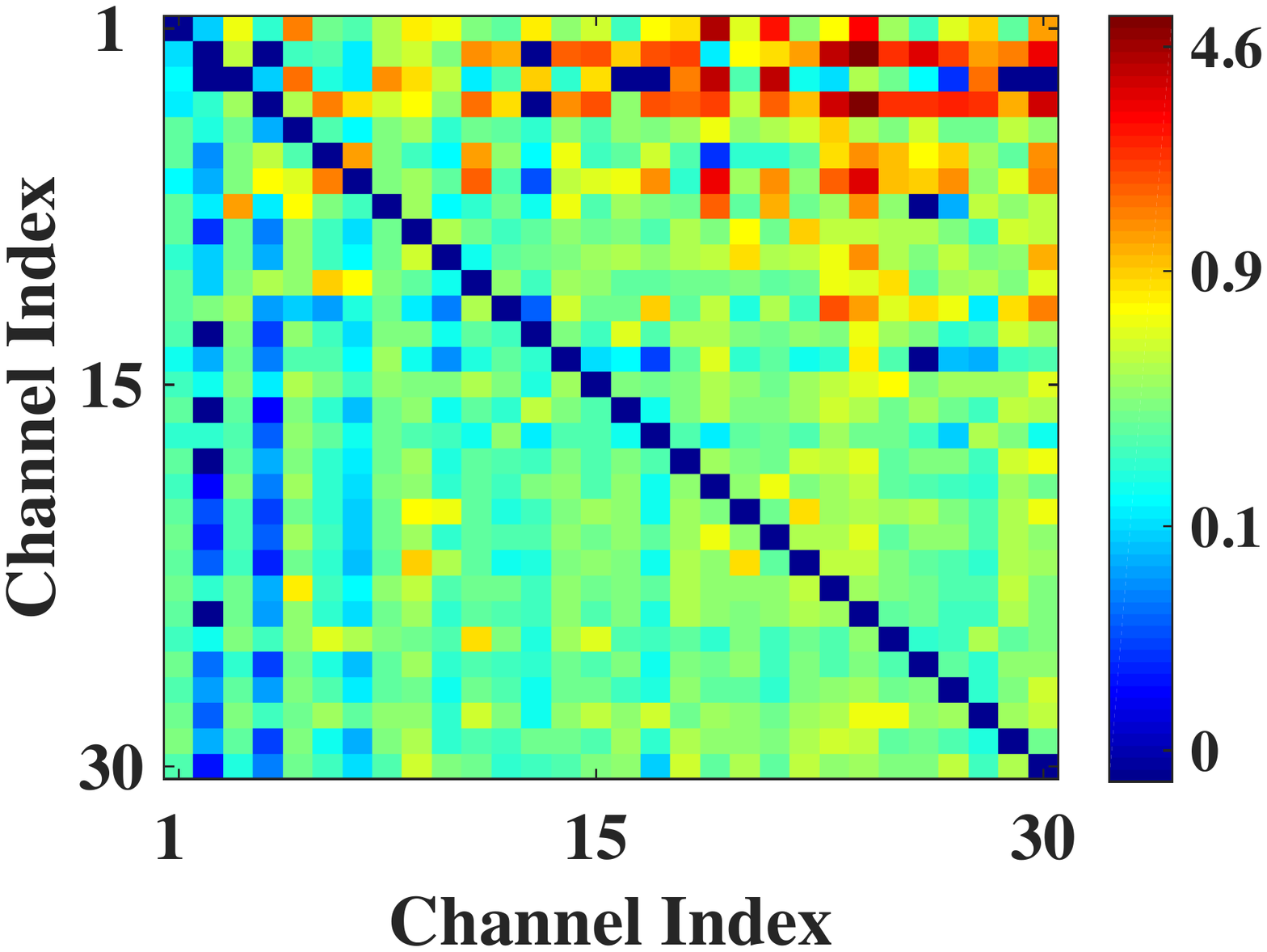}}
\caption{Causal connectivity between $30$ high energy channels estimated from ECoG data between $241$s and $271$s from the second seizure of P1. The channel indices with bluish rows and bluish columns (correspond to low DI estimates) in Fig.~\ref{Fig:P1Causal_connectivity}a correspond to isolated nodes and are the estimated SOZ using model-based algorithm. The corresponding channels in Fig.~\ref{Fig:P1Causal_connectivity}b have large net-outflows of information and are the estimated SOZ from data-driven algorithm.}
\label{Fig:P1Causal_connectivity}
\squeezeup
\end{figure}
 
\subsubsection{Data-driven SOZ Identification Algorithm}
In this algorithm, no parametric model assumptions were imposed on ECoG data. The causal connectivity between the $M$ high energy channels selected in the first stage was inferred using the data-driven DI estimator. This estimator inferred both linear and nonlinear causal interactions between channels. Intuitively, activity at the SOZ electrodes drives the activity at the other electrodes into a hypersynchronous state via linear and nonlinear causal interactions \cite{lehnertz2008}. We therefore expect the SOZ electrodes to act as sources (with strong outgoing and weak incoming causal connections) in the causal connectivity graph inferred around the seizure start time using data-driven DI. As a result, the SOZ nodes in the causal connectivity graph are expected to have large net-outward flow of information. The data-driven SOZ identification algorithm quantifies this intuition to estimate SOZ. The net-outward flow ($\Phi$) of causal information from an electrode $i$ is calculated using 
\begin{align} \label{nof}
\Phi(i) = \sum\limits_{j=1,j \neq i}^{M} \left\{I(i \rightarrow j) - I(j \rightarrow i)\right\}.
\end{align}
If a patient had multiple seizures, the net outward flow of an electrode is the average net outward flow of that electrode across all seizures recorded in that patient. Then the normalized net outward flow ($\tilde{\Phi}$) of the electrode $i$ is given by 
\begin{align} \label{pnof}
\tilde{\Phi}(i) = 100 \times \frac{\Phi(i)}{\sum\limits_{j:\Phi(j)>0} \Phi(j)}.
\end{align}
The electrodes with $\tilde{\Phi} > 5\%$ are considered to have significant net outward flow of information in the causal connectivity graph and are identified as the seizure onset electrodes for that patient by the data-driven SOZ identification algorithm. 

\subsection{Performance of Proposed SOZ Identification Algorithms}
The energy detector selected the top $M=30$ channels with the largest energy computed from a $100$s window comprising of $50$s of activity immediately before and after the seizure starts. The causal connectivity graph between these high energy channels is then estimated using model-based and data-driven DI estimators from a $30$s window that begins $20$s before the seizure start time \blue{and ends $10s$ into the start of the seizure}. We assumed that the current activity at an ECoG channel does not depend on more than $150$ms of past activity ($150$ past samples at $F_s=1$KHz) at this channel and other channels. This corresponds to restricting the model order $J_{yy},K_{yx}$ search space to $\left[1,150\right]$ for the MVAR model-based DI estimator. 
In addition, we need to capture the connectivity just before and just after a seizure starts to estimate the SOZ. Therefore, we used ECoG data from a $30s$ window ($3\times 10^4$ data points) that begins $20s$ before the start of the seizure to be stationary. The same window was used for the data-driven estimator as well. In addition, the past activity was down-sampled by a factor of $50$ for the data-driven estimator to restrict the $J_{yy},K_{yx}$ search space to $\left[1,4\right]$ and also reduce its computational complexity (i.e. the past activity of channel $\mathbf{X}$ can include $\left\{x_n,x_{n-50},x_{n-100},x_{n-150}\right\}$). The exact values of these parameters is not crucial as the algorithms seem to be fairly robust to changes in these parameters.

Consider the second seizure record of patient P1. The energy detector selected $30$ high energy channels. Fig.~\ref{Fig:ECoG_Data} shows the recordings from these channels in the $30$s window in which causal connectivity graph is inferred. The inferred graph by model-based and data-driven approaches is shown in Fig.~\ref{Fig:P1Causal_connectivity}. The weighted adjacency matrix of the inferred causal connectivity graphs, whose $(i,j)^{th}$ element is the DI estimate from channel $i$ to $j$ for $i,j\in \left[1,30\right]$, is plotted in Fig.~\ref{Fig:P1Causal_connectivity} using a image plot. It is clear from this figure that the mean strength of the DI estimates using model-based approach is smaller than using data-driven approach (colorbar ranges are different in the two sub-figures). We observed this across all the twelve seizures analyzed. This indicates  that data-driven DI captured more causal information on average than model-based DI, implying that non-linear causal interactions are stronger around the beginning of a seizure. \blue{The nodes with zero degree in the causal connectivity graphs from each seizure in a patient are identified as the SOZ by the model-based algorithm. The zero degree criterion used by model-based algorithm is counterintuitive, since we except the SOZ to drive the network to seizure state and not be weakly connected. On the other hand, the data-driven algorithm selects electrodes with large net outflows, which is very intuitive.}  The data-driven algorithm computed the normalized net outward flow for each node using \eqref{pnof}. Fig.~\ref{Fig:PNOF}a plots the $\tilde{\Phi}$ for all electrodes with positive net outward flows in patient P1. The electrodes with $\tilde{\Phi} > 5\%$ are the estimated SOZ for this patient P1 using data-driven algorithm.
\begin{figure}
\subfloat[From Patient P1]{
\centering
\includegraphics[width=0.48\columnwidth]{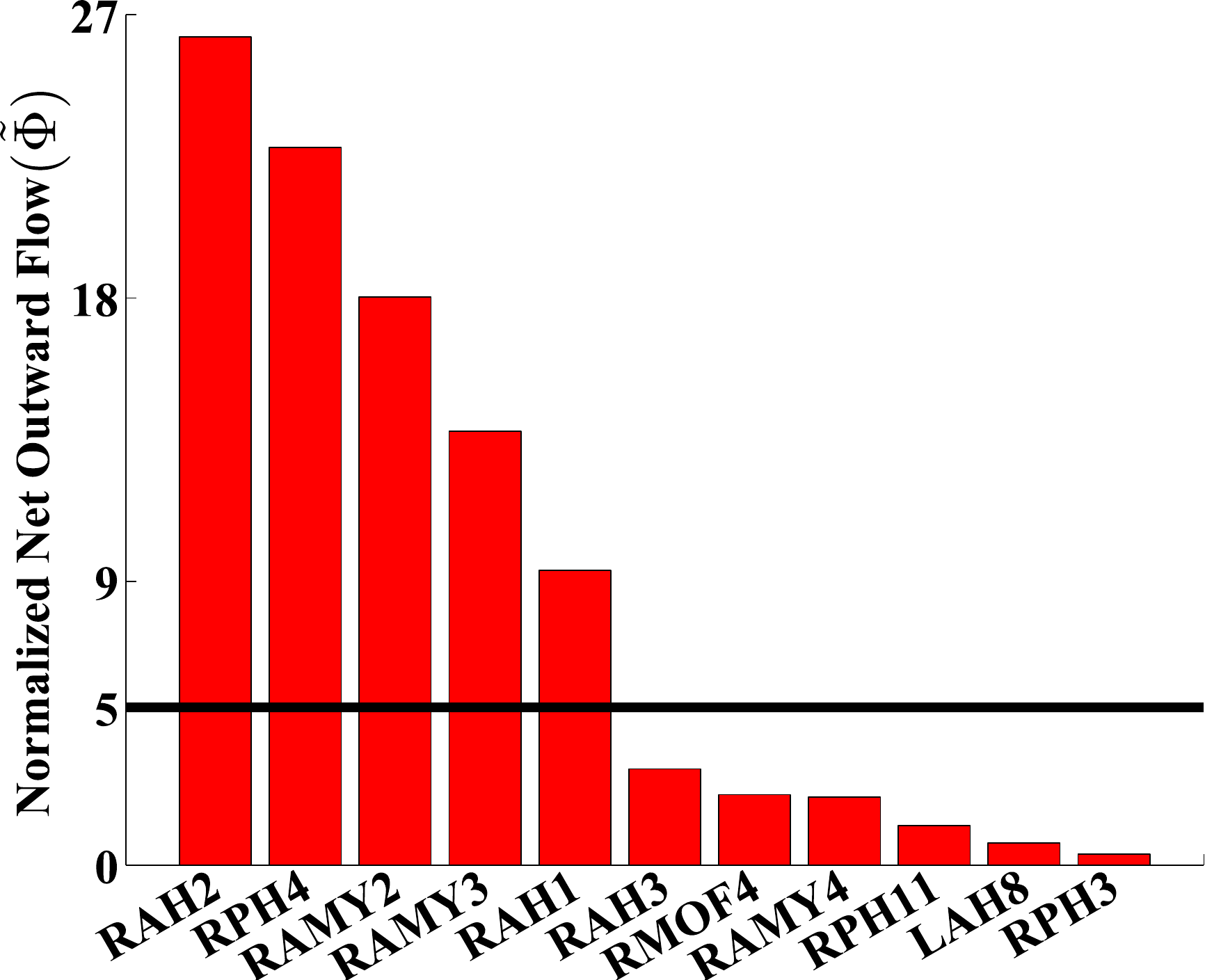}}
\subfloat[From Patient P3]{
\centering
\includegraphics[width=0.48\columnwidth]{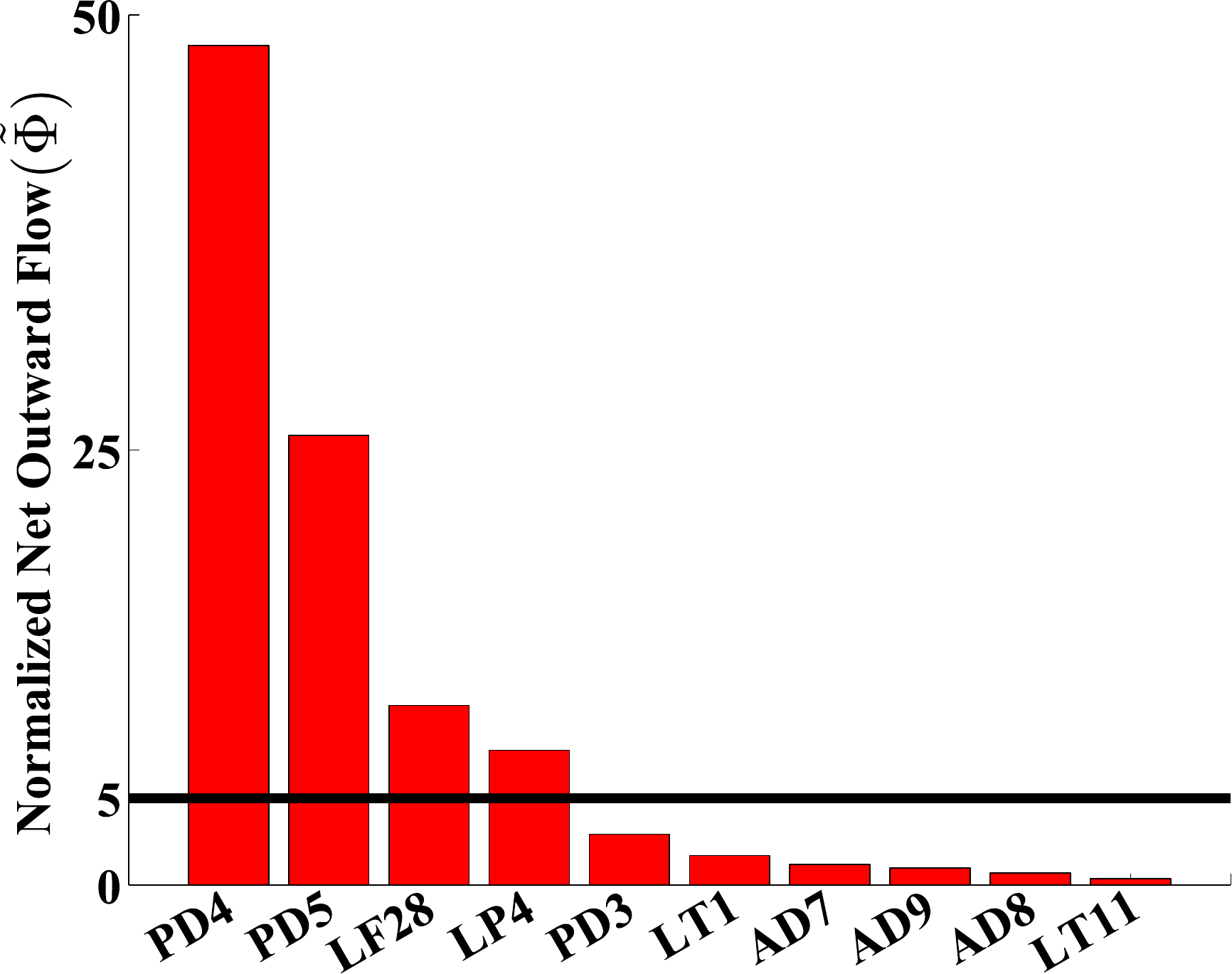}}
\caption{Normalized net outward flow from the ECoG electrodes with positive net information outflow using data-driven SOZ identification algorithm.} \label{Fig:PNOF}
\squeezeup
\end{figure}

Table~\ref{Table1} summarizes the results from our analysis. The first column in Table~\ref{Table1} identifies the patient ID and the number of seizures analyzed for that patient. The second, third and fourth columns in Table~\ref{Table1} list the SOZ identified across all the five patients using model-based, data-driven algorithms and visual analysis respectively. \blue{We observed that all the channels identified as SOZ by visual analysis, except AST 2 in one seizure of P5, are included in the $30$ high energy channels selected from each seizure by the energy detector in the first stage. The top $30$ channels selected from two seizures in patient P5 contained AST 2, but the $30$ channels picked from the third seizure did not contain AST 2. The normalized net outflow $\tilde{\Phi}$ from AST 2 electrode for patient P5 using data-driven algorithm was $1\%$ and hence this electrode was not identified as SOZ (note that $\tilde{\Phi}$ has to exceed $5\%$ to be selected as SOZ). Expect for this one region,}  it is clear from this table that the  data-driven algorithm identifies all the regions identified by the neurologist, whereas the model-based algorithm misses some regions (for instance, RAMY electrodes in P1, TP and AST electrodes in P5). Also, the model-based algorithm incorrectly identified lateral temporal (LT) electrodes as SOZ in patient P3, whereas data-driven algorithm correctly identified posterior depth (PD) electrodes in hippocampus as SOZ. Except in P3 and P4, both algorithms do not have any false positives. The false positives in P4 could be because only one seizure was analyzed in this patient.

Another advantage of the data-driven SOZ identification algorithm over model-based algorithm and analysis by the neurologist is that $\tilde{\Phi}$ could be used as a quantitative metric to rank the electrodes in the decreasing order of clinical relevance. Fig.~\ref{Fig:PNOF} plots the $\tilde{\Phi}$ of all electrodes with positive net outward flows in patients P1 and P3. Using our metric $\tilde{\Phi}$, it is clear from Fig.~\ref{Fig:PNOF}b that electrodes PD4, PD5 contribute much more in generating and spreading seizures than LF28 and LP4 electrodes even though $\tilde{\Phi}$ exceeds the chosen threshold at all these four electrodes. Depending on the significance level ($5\%$ is used here), the set of selected SOZ electrodes varies. We observed in all five patients that the electrodes with the highest $\tilde{\Phi}$ values were always the same as the ones identified by the neurologist. \blue{Visual analysis by the neurologist can only give qualitative information about the SOZ and cannot give quantitative information like the proposed data-driven SOZ identification algorithm.} In addition, data-driven algorithm can also differentiate between electrodes in close proximity - for example, $\tilde{\Phi}$ is negative for RPH2 electrode in P1 even though $\tilde{\Phi}$ is positive for both RPH3 and RPH4 (refer to Fig.~\ref{Fig:PNOF}a). The increased spatial-specificity provided by our data-driven algorithm could be relevant for next generation epilepsy treatments  \cite{krook2015}. The main advantage of the model-based algorithm over data-driven one is its lower computational complexity. However, this is less critical with today's powerful computers. To summarize, data-driven SOZ identification algorithm outperforms model-based algorithm and provides more interpretable results.
\begin{table}
\centering
\caption{Seizure onset zone identified from the proposed algorithms and the visual analysis by neurologist.}
\begin{threeparttable}
\label{Table1}
\setlength{\tabcolsep}{1.5pt}
\begin{tabular}{|c|c|c|c|}
\hline
\multicolumn{1}{|c|}{\makecell{\textbf{Patient - \#}\\ \textbf{of Seizures}}} & \textbf{\makecell{Model-based \\ Algorithm}} & \textbf{\makecell{Data-driven \\ Algorithm}} & \multicolumn{1}{c|}{\textbf{Visual Analysis}} \\
\hline
\multicolumn{1}{|c|}{P1 - 3} & {{\textit{RAH 1-3}, \textit{RPH 2-4}}} & \multicolumn{1}{c|}{\makecell{\textit{RAH 1-2}, \textit{RPH 4}, \\ RAMY 2-3}} & \multicolumn{1}{c|}{\makecell{RAH 1-3, RPH 2-4, \\ RAMY 2-3}}\\
\hline
{P2 - 3} & {{\textit{LAH 2-4}, \textit{LPH 1-2}}} & \textit{LAH 2-4, LPH 2} & LAH 2-4, LPH 1-2\\
\hline
{P3 - 2} & {LT 1-3, 10} & \textit{PD 4-5}, LF 28, LP 4 & {PD 3-5}\\
\hline
\multicolumn{1}{|c|}{P4 - 1} & {\makecell{\textit{LO 3, 14, 15, 25}, \\ LO 12, 13, \textit{PST 3}, \\ PST 1, MOG 27}} & {\makecell{\textit{LO 3, 14, 15}, 12, \\ PST 1, MOG 23, \\ SOG 21, 36 }} & \multicolumn{1}{c|}{\makecell{LO 3, 14, 15, \\ LO 25, PST 3}}\\
\hline
\multicolumn{1}{|c|}{P5 - 3} & {{\textit{MST 1, 2}, \textit{HD 1}}} & \multicolumn{1}{c|}{\makecell{\textit{MST 1, TP 1}, \\ \textit{HD 1} }} & \multicolumn{1}{c|}{\makecell{MST 1, 2, TP 1,\\ HD 1-3,  AST 2}}\\
\hline
\end{tabular}
\begin{tablenotes}
\item The label of an ECoG electrode comprises of an abbreviation of the brain region it is implanted in and a number. For depth electrodes, smallest number is assigned to deepest electrode from scalp. For instance, RAH1 - deepest electrode contact in depth electrode in right anterior hippocampus and LO3 - third electrode contact in subdural grid electrode over lateral occipital lobe. RPH - right posterior hippocampus, RAMY - right amygdala, LF - lateral frontal, LP - lateral parietal, LT - lateral temporal, PD - posterior hippocampal depth, MOG - medial occipital grid, SOG - sub-occipital grid, PST - posterior sub-temporal, MST - mid-subtemporal lobe AST - anterior sub-temporal lobe, TP - temporo-polar, HD - hippocampal depth.
\end{tablenotes}
\end{threeparttable}
\squeezeup
\end{table}

\section{Discussion and Conclusions}
\begin{figure*}[!t]
\centering
\begin{minipage}[b]{0.24\textwidth}
\centering
\includegraphics[width=\textwidth]{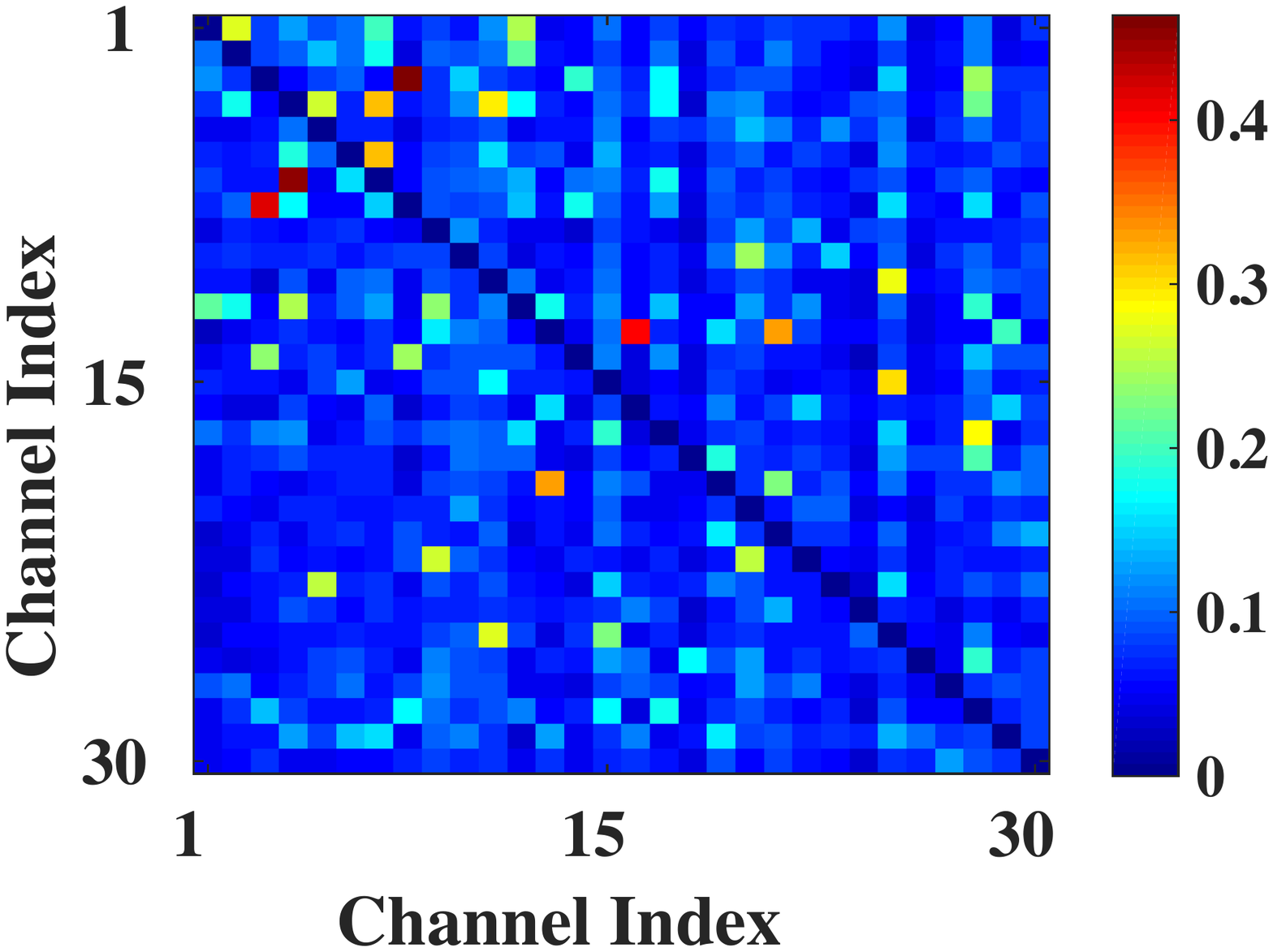}
\caption{\blue{Causal connectivity between 30 high energy channels depicted in Fig.~\ref{Fig:ECoG_Data} estimated using PDC.}}
\label{Fig:PDC}
\end{minipage}
\vrule 
\begin{minipage}[b]{0.75\textwidth}
\centering
\subfloat[From a segment before the seizure]{
\includegraphics[width=0.3\textwidth]{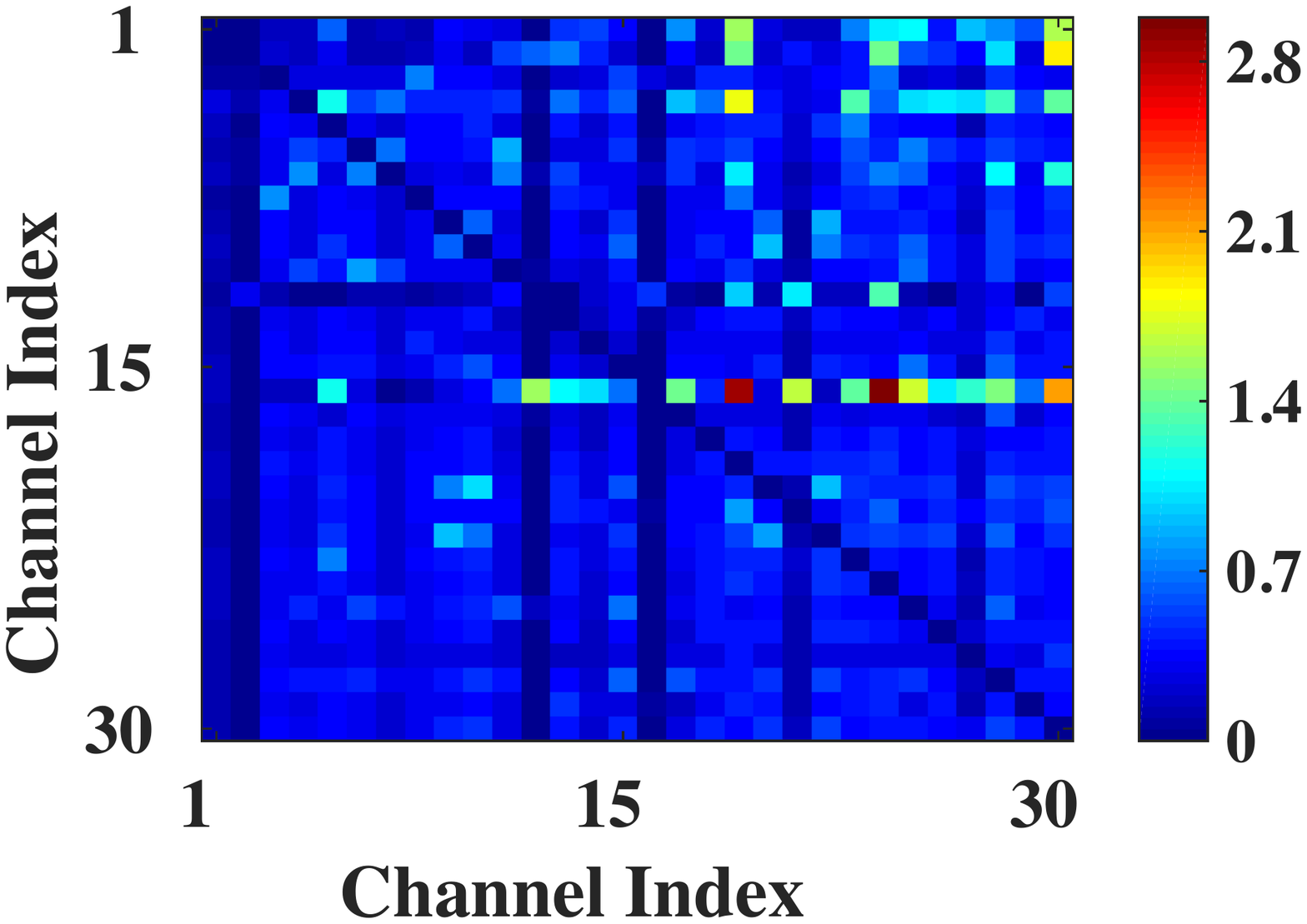} 
\label{Fig:Before_Seizure}}
\hfill
\subfloat[From a segment during the seizure]{
\includegraphics[width=0.3\textwidth]{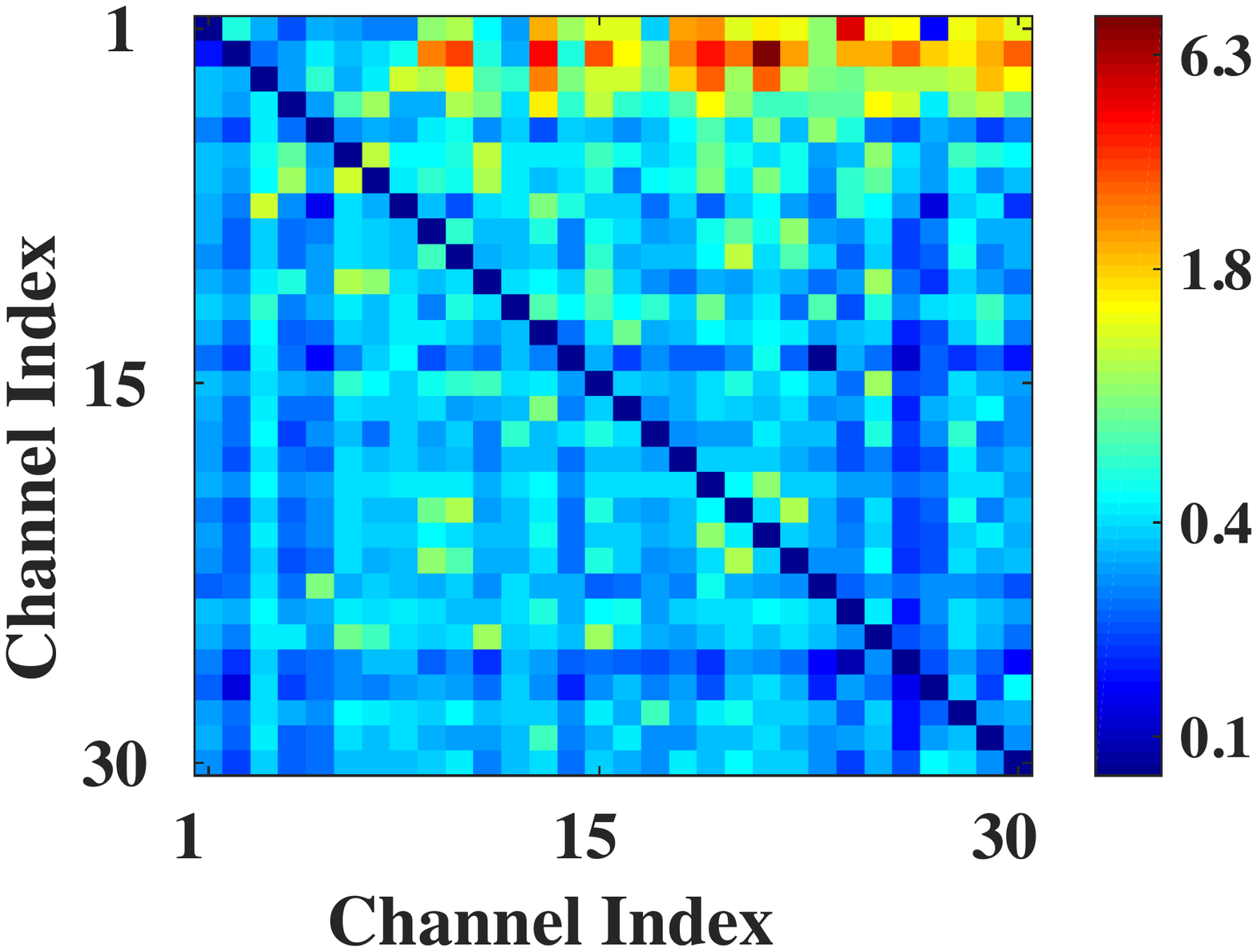} 
\label{Fig:During_Seizure}}
\hfill
\subfloat[From a segment after the seizure]{
\includegraphics[width=0.3\textwidth]{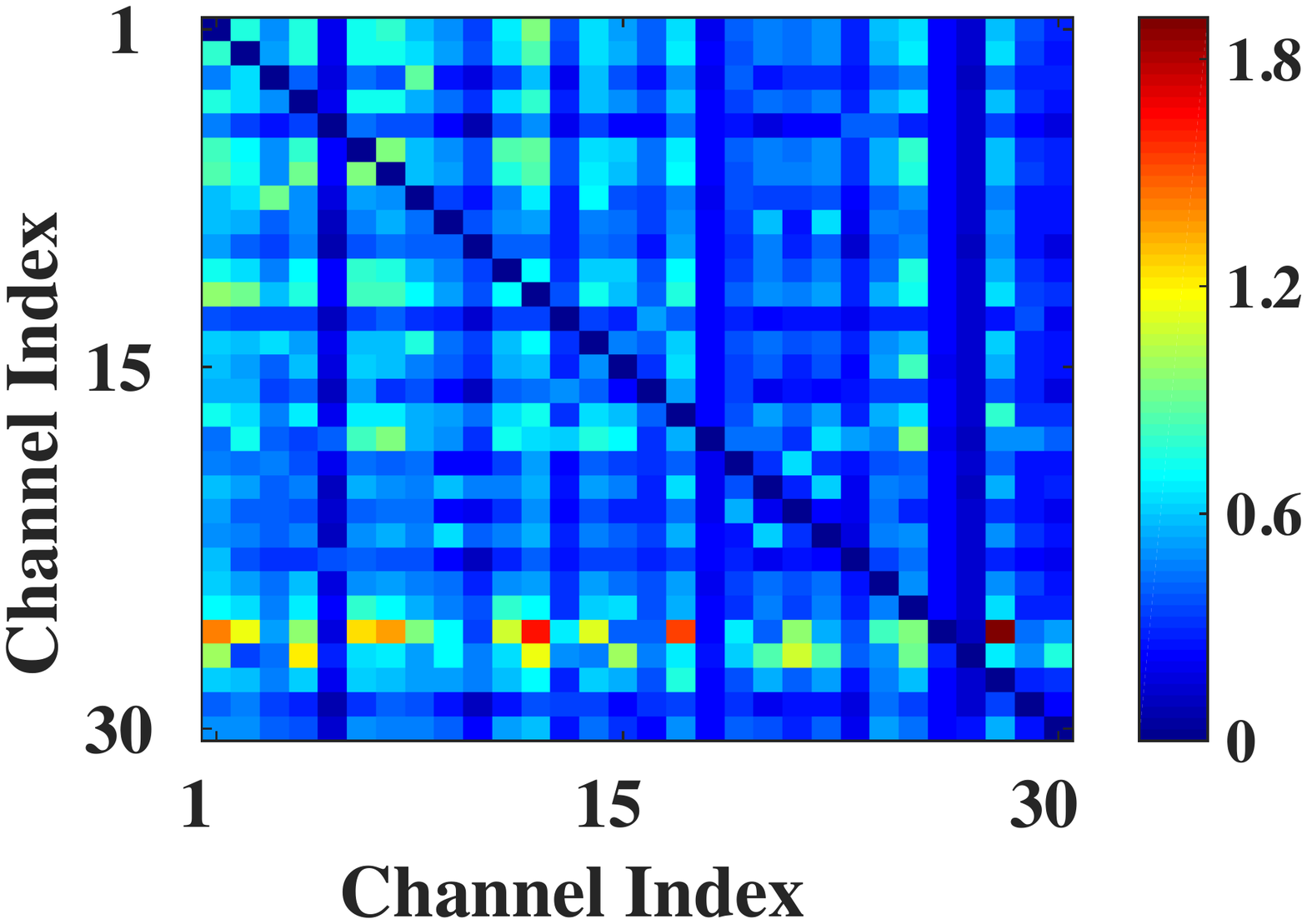} 
\label{Fig:After_Seizure}}
\caption{\blue{Causal connectivity between the 30 high energy channels from the second seizure of patient P1 estimated using data-driven DI estimator from ECoG data in three segments - one before seizure (181s -211s), one during seizure (261s - 291s) and one after seizure (361s - 391s). This seizure starts at 261s and ends at 350s. 
}}\label{Fig:Dynamic_Connectivity}
\end{minipage}
\squeezeup
\end{figure*} 

An almost surely convergent MVAR model-based and data-driven estimators for DI are introduced in this paper. Linear causal interactions between two time-series can be quantified using MVAR model-based DI estimator, whereas both linear and nonlinear causal interactions are quantified by data-driven DI estimator. The resultant DI estimates can be used to infer whether the data has (1) linear causal interactions or (2) both linear and nonlinear causal interactions. If the MVAR model-based DI estimate is comparable in value to data-driven DI estimate, then the interaction is predominantly linear. This is not feasible with existing metrics because they can be split into two non-overlapping groups - the first group only detects linear causal interactions (e.g., Granger causality, partial directed coherence), whereas the second group detects both linear and nonlinear causal interactions (e.g., transfer entropy). The DI estimators proposed in this paper can be automatically adapted to other types of electrophysiological data like EEG to learn the causal connectivity.

Data-driven DI estimator seems to be more appropriate than model-based DI estimator if the underlying data distribution is not known, which is the case with most real data. The main challenge with data-driven DI estimator is estimating the causal conditional likelihood nonparametrically and its computational complexity. We used kernel density estimators in this paper to estimate causal conditional likelihood. Kernel density estimators are asymptotically optimal \cite{scott2015}. Their bias decreases with increasing number of data samples and complexity increases with the dimensionality of the data, just like other nonparametric estimators. Even though we selected optimal bandwidth using smoothed cross-validation to minimize the asymptotic mean integrated squared error, several other criteria could also be used \cite{scott2015,duong2007}. In addition, data-driven entropy estimators based on adaptive partitioning, nearest neighbors and m-spacing algorithms \cite{wang2009,Liu2012} can also be used to estimate DI nonparametrically. Another approach to estimate DI nonparametrically is to extend the universal DI estimator proposed for discrete-valued signals in \cite{Weissman2013a} to continuous-valued ECoG signals. Future work should also include developing approximate data-driven DI estimators to further reduce computational complexity.

Directed information was used in this paper to estimate causal connectivity between ECoG channels. The causal connection identified between two channels could be due to the effect of activity at other spatial locations in the brain. If an ECoG electrode was implanted at these other locations, causally conditioned DI can be used to remove their influence. This was demonstrated using the \blue{four} node examples in section~\ref{sec:DI_Simulated_Data}. On the other hand, if ECoG activity is not recorded from these locations, then removing the effects of these hidden nodes on the inferred causal connectivity is a very hard problem in general. Future work should look into the sensitivity of DI to volume conduction effects when compared with synchronization metrics like phase lag index \cite{stam2007}. 

\blue{DI estimators proposed in this paper do not quantify the amount of causal information between time-series at each frequency, unlike partial directed coherence (PDC) or directed transfer function (DTF). However, the advantage of DI is that data-driven DI estimator can detect nonlinear causal interactions, which PDC or DTF cannot detect. Metrics based on PDC, DTF assume the data is drawn from a MVAR model and can only detect linear causal interactions (similar to MVAR model-based DI estimator proposed in this paper). To demonstrate this, we estimated the causal connectivity graph between the 30 channels depicted in Fig.~\ref{Fig:ECoG_Data} by PDC using eMVAR toolbox\cite{faes2011}. The resultant $30 \times 30$ causal connectivity matrix is plotted in Fig.~\ref{Fig:PDC}, in which $\left(i,j\right)$ element corresponds to the maximum value of PDC from channel $i$ to channel $j$. Note that causal connectivity estimates from the proposed DI estimators for the same data is plotted in Fig.~\ref{Fig:P1Causal_connectivity}. 
Comparing Fig.~\ref{Fig:PDC} with Fig.~\ref{Fig:P1Causal_connectivity}b, it is clear that net outflow from the SOZ electrodes is not large in PDC when compared to data-driven DI. This implies unlike data-driven DI estimator, PDC cannot capture nonlinear causal interactions.}

We also proposed model-based and data-driven algorithms to identify the SOZ. The first stage of both these algorithms is an energy detector. The chosen electrodes from the first stage turned out to have large overlap (more than half) across multiple seizures within a patient. All electrodes with low rhythmic gamma activity in SOZ were selected by the energy detector in all the patients analyzed. Note that other criteria could also be used instead of energy detector. In particular, we experimented with selecting channels displaying strong high-frequency activity around the seizure start time (since channels involved in seizure onset display strong high-frequency activity around the beginning of a seizure that typically develops into high amplitude low-frequency activity). The time-window used to estimate the high-frequency activity should be of much smaller length than the one used with energy detector, because the seizures typically display low amplitude rhythmic high-frequency oscillations only for a very short duration. The resulting performance with energy detector or the high-frequency activity detector was similar. We therefore presented the results only with the energy detector in this paper.

The causal connectivity graphs between the selected high energy channels estimated using MVAR model-based DI and data-driven DI from same time-window are not the same, since both estimators capture different causal interactions in the data - model-based captures linear interactions, whereas data-driven captures both linear and nonlinear causal interactions. Therefore the criterion used to estimate SOZ from the causal connectivity graph was different for the two algorithms. In the model-based approach, the SOZ nodes are isolated since they drive the other brain regions into a seizure through nonlinear interactions (which are not captured by model-based DI estimator). Similar results were reported in other studies using linear metrics \cite{warren2010,burns2014}. It is reported in \cite{burns2014} that SOZ electrodes form an isolated focus using symmetric coherence metric that captures linear interactions. On the other hand in causal connectivity graphs estimated by data-driven DI, the outgoing and incoming edges from SOZ electrodes have large and small DI estimates respectively (refer to Fig.~\ref{Fig:PNOF}). This is in accordance with our intuition that the SOZ drives the seizure activity \cite{lehnertz2008, Luders2006,rosenow2001}. Also metrics closely related to net outward flow were used in \cite{sabesan2009} to infer SOZ using transfer entropy (which detects nonlinear interactions) by analyzing hours of ECoG recordings (here we are only using recordings from a $30$s window). 

\blue{A key advantage of ECoG recordings over other neuroimaging techniques is its good temporal resolution. The DI estimators proposed in this paper can be applied to ECoG recordings from different windows to learn the spatiotemporal changes in causal connectivity networks during the course of a seizure. The causal connectivity before, during and after the second seizure of patient P1 estimated using data-driven DI estimator from three $30$s long windows is shown  in Fig.~\ref{Fig:Dynamic_Connectivity}. It is clear from Fig.~\ref{Fig:Dynamic_Connectivity} that SOZ electrodes (corresponding to rows with more red color or large DI values in Fig.~\ref{Fig:Dynamic_Connectivity}b) have large net outflows during seizure when compared with before and after seizure (same rows have more blue color or smaller DI value in Fig.~\ref{Fig:Dynamic_Connectivity}a ,~\ref{Fig:Dynamic_Connectivity}c). We are extending this analysis to infer seizure mechanisms by examining the changes in causal connectivity estimated from preictal, ictal, postictal periods when compared with interictal periods. This is the subject of our current and future work \cite{malladi2016}.} The results from this analysis potentially could improve our understanding of seizure  mechanisms and  lead to the development of novel non-surgical treatments for epilepsy.

\section{Acknowledgments}
The authors wish to thank Sugnaya Karunakaran for carefully reading the manuscript.

\begin{appendices}
\section{Proof of causal conditional entropy estimator}\label{AppendixA}
\subsection{Proof of Lemma~\ref{lemma1}}
First, we will prove the existence of $h\left(\mathbf{Y\|\mathbf{X}}\right)$. Since conditioning reduces differential entropy, we have
\begin{align} \label{AppendixA_eq0}
h\left(y_1\right) & \geq h\left(y_1|\mathbf{X}_1^1\right) \geq h\left(y_2|\mathbf{Y}_1^1,\mathbf{X}_1^2 \right) \nonumber \\
& \geq \cdots \geq h\left(y_{n}|\mathbf{Y}_{n-J}^{n-1},\mathbf{X}_{n-K+1}^{n}\right) \geq \cdots\cdots
\end{align}
Therefore the sequence $h\left(y_{n}|\mathbf{Y}_{n-J}^{n-1},\mathbf{X}_{n-K+1}^{n}\right)$ is a non-increasing sequence that is upper bounded by $h\left(y_1\right)$.
Also let $l = \max \left(J+1,K\right)$. Then for $n \geq l$,
\begin{align}
h\left(y_n|\mathbf{Y}_{1}^{n-1},\mathbf{X}_1^{n}\right) & = h\left(y_n | \mathbf{Y}_{n-J}^{n-1}, \mathbf{X}_{n-K+1}^{n}\right) \label{AppendixA_eq1}\\
& = h\left(y_l | \mathbf{Y}_{l-J}^{l-1}, \mathbf{X}_{l-K+1}^{l}\right), \label{AppendixA_eq2}
\end{align}
where \eqref{AppendixA_eq1} is from the Markovian assumption and \eqref{AppendixA_eq2} is from the stationarity assumption. Note that \eqref{AppendixA_eq2} also implies the sequence $h\left(y_{n}|\mathbf{Y}_{n-J}^{n-1},\mathbf{X}_{n-K+1}^{n}\right)$ is lower bounded by $h\left(y_l | \mathbf{Y}_{l-J}^{l-1}, \mathbf{X}_{l-K+1}^{l}\right)$. Let $a_n = h\left(y_n|\mathbf{Y}_{1}^{n-1},\mathbf{X}_1^{n}\right)$ and $b_N  =  \tfrac{1}{N} h\left(\mathbf{Y}^N \| \mathbf{X}^N \right) = \tfrac{1}{N}\sum\limits_{n=1}^{N} a_n$. Since the $\lim\limits_{N \rightarrow \infty} a_N$ exists, from Cesaro mean theorem \cite{Cover2006} we have $h\left(\mathbf{Y}\|\mathbf{X}\right) = \lim\limits_{N \rightarrow \infty} b_N$ also exists. The above proof can be easily modified to prove $h\left(\mathbf{Y}\right)$ exists. Therefore $I\left(\mathbf{X} \rightarrow \mathbf{Y}\right) = h\left(\mathbf{Y}\right) - h\left(\mathbf{Y}\|\mathbf{X}\right)$ also exists.

\subsection{Proof of Lemma~\ref{lemma2}}
\vspace*{-0.5cm}
\begin{align} 
\tfrac{1}{N}h\left(\mathbf{Y}^{N}\|\mathbf{X}^{N}\right) &  = \tfrac{1}{N} \textstyle \sum\limits_{n=1}^{N} h\left(y_n|\mathbf{Y}_{n-J}^{n-1},\mathbf{X}_{n-K+1}^{n} \right) \label{AppendixA_eq5} \\
& \hspace*{-1cm} = \tfrac{1}{N} \textstyle \sum\limits_{n=1}^{N} \mathbb{E} \left[-\log \mathrm{P}\left(y_l|\mathbf{Y}_{l-J}^{l-1},\mathbf{X}_{l-K+1}^{l} \right) \right] \label{AppendixA_eq7} \\
& \hspace*{-1cm} = \mathbb{E}\left[-\log \mathrm{P}\left(y_l|\mathbf{Y}_{l-J}^{l-1},\mathbf{X}_{l-K+1}^{l} \right) \right], \nonumber 
\end{align} where \eqref{AppendixA_eq5} is from chain rule and Markovian assumption, and \eqref{AppendixA_eq7} is due to stationarity.

\subsection{Proof of Theorem~\ref{theorem1}}
Let $g_{J,K}\!\!\left(\!\mathbf{Y}^n_{n-J}\!,\!\mathbf{X}^n_{n-K+1)} \!\right)\!\! \!= \!\!\! - \!\log \mathrm{P}\!\left(y_n| \mathbf{Y}^{n-1}_{n-J}\!,\!\mathbf{X}^n_{n-K+1} \right)$ be a fixed function over the states of the Markov chain $\left(\!\mathbf{Y}_{n-J}^{n}\!,\!\mathbf{X}_{n-K+1}^{n}\!\right)$. From the strong law of large numbers for Markov chains \cite{Meyn2009} which states that for a fixed function $g\left(.\right)$ over the states of the Markov chain, the sample mean will almost surely converge to the expected value as $N \rightarrow \infty$, we have,
\begin{align}\label{AppendixA_eq3}
\tfrac{1}{N}\!\!\textstyle \sum\limits_{n=1}^{N}\!\! g_{J,K}\!\!\left(\!\mathbf{Y}_{n-J}^{n}\!,\!\mathbf{X}_{n-K+1}^{n}\! \right)\!\! \xrightarrow{a.s.} \!\! \mathbb{E}\!\!\left[\! g_{J,K}\!\!\left(\!\mathbf{Y}_{l-J}^{l}\!,\!\mathbf{X}_{l-K+1}^{l} \!\right) \!\right].
\end{align}
We also have
\begin{align} 
h\!\!\left(\!\mathbf{Y}\!\|\!\mathbf{X}\!\right) & \!\!=\!\! \lim\limits_{\scriptscriptstyle{N \rightarrow \infty}} \tfrac{1}{N}\! h\!\!\left(\!\mathbf{Y}^N \!\|\! \mathbf{X}^N\!\right) \!\!=\!\! \lim\limits_{\scriptscriptstyle{N \rightarrow \infty}} \!\!\mathbb{E}\!\!\left[\! g_{\scriptscriptstyle{J,K}}\!\!\left(\!\mathbf{Y}_{l-J}^{l}\!,\!\mathbf{X}_{l-K+1}^{l} \!\right) \right] \label{AppendixA_eq8} \\
& = \mathbb{E}\left[ g_{J,K}\left(\mathbf{Y}_{l-J}^{l},\mathbf{X}_{l-K+1}^{l} \right) \right] \label{AppendixA_eq4},
\end{align} 
where \eqref{AppendixA_eq8} is from Lemma.~\ref{lemma2}. We have from \eqref{AppendixA_eq3}, \eqref{AppendixA_eq4}, $ \hat{h}\left(\mathbf{Y} \| \mathbf{X} \right) \!=\! \tfrac{1}{N} \textstyle \sum\limits_{n=1}^{N} g_{J,K}\left(\mathbf{Y}^n_{n-J},\mathbf{X}^n_{n-\left(K-1\right)} \right) \! \xrightarrow{a.s.} \!h\left(\mathbf{Y} \| \mathbf{X} \right). $
\section{Derivation of DI for Linear Two Node Network}\label{AppendixC}
Consider the MVAR model in section~\ref{subsec:linear_two_node}, described by \eqref{linear_two_node_eq}. Here we will derive the DI in both directions between time-series $\mathbf{X}$ and $\mathbf{Y}$ for non-zero $\beta_1, \beta_2$. Appendix.~\ref{AppendixC2} considers the case when $\left(\beta_1, \beta_2\right) \in \left\{\left(1,0\right),\left(1,0\right)\right\}$.
\subsection{DI from \textbf{X} to \textbf{Y}}
For the system described by \eqref{linear_two_node_eq}, the causal conditional entropy $h\left(\mathbf{Y}\|\mathbf{X}\right)$ is given by 
\begin{align}
h\left(\mathbf{Y}\|\mathbf{X}\right) = \lim\limits_{N \rightarrow \infty} \frac{1}{N} h\left(\mathbf{Y}^N \| \mathbf{X}^N\right) = \frac{1}{2} \log\left(2\pi e \sigma_z^2 \right), \label{AppendixCeq5}
\end{align}
because conditioned on $\left(x_n,x_{n-1}\right)$, the only uncertainty in $y_n$ is due to the i.i.d Gaussian noise $\mathbf{Z}$ of variance $\sigma_z^2$ which is independent of $\mathbf{X}$.

Now, from \eqref{linear_two_node_eq}, we have $\left(y_1,y_2,\cdots,y_N\right)^T \sim \mathcal{N}\left(\mathbf{0} , \Sigma_N \right)$, where $\Sigma_N = \delta M_{N}$ with $\delta = \beta_1 \beta_2 \sigma_x^2$. $M_N$ is a tridiagonal matrix whose main diagonal elements are $D$ and non-zero diagonal below and above the main diagonal are all 1. $D = \frac{\gamma}{\delta}$, where $\gamma = \left(\beta_1^2 + \beta_2^2 \right)\sigma_x^2 + \sigma_z^2$. 
Upon further simplification using the tridiagonal matrix determinant from \cite{Hu1996}, we have 
\begin{align}
\left|\Sigma_N \right| = \left|\delta \right|^N \frac{\sinh \left(\left(N+1\right) \lambda\right)}{\sinh \lambda}, \: \mbox{where} \: \lambda = \cosh^{-1} \left(\frac{\left|D\right|}{2} \right).
\end{align}
The unconditioned entropy of $\mathbf{Y}$ is now given by
\begin{align} 
h\left(\mathbf{Y}\right) \!\! = \!\! \lim\limits_{N \rightarrow \infty}\!\! \frac{1}{2N}\!\! \log\!\!\left(\!\!\left(\!2\pi e\!\right)^N\!\! \left|\!\Sigma_N \!\right|\!\right) \!\! = \!\! \frac{1}{2}\!\log\! \left(\!2\pi e\! \left|\delta\right|\!\right) \!\! + \!\! \frac{1}{2}\lambda, \label{AppendixCeq8}
\end{align}
obtained by expanding the hyperbolic $\sinh$ function in the determinant $\left|\Sigma_N \right|$ in terms of exponentials and some basic algebraic manipulations. Now, from \eqref{AppendixCeq5} and \eqref{AppendixCeq8}, we have
\begin{align}
I\left(\mathbf{X} \rightarrow \mathbf{Y}\right) = \frac{1}{2} \log \left(\frac{|\beta_1 \beta_2|\sigma_x^2 }{\sigma_z^2} \right) + \frac{1}{2} \cosh^{-1}\left(\frac{\left(\beta_1^2 + \beta_2^2\right)\sigma_x^2 + \sigma_z^2}{2|\beta_1 \beta_2|\sigma_x^2} \right). \nonumber
\end{align}

\subsection{DI from \textbf{Y} to \textbf{X}}
The causal conditional entropy, $h\left(\mathbf{X} \|\mathbf{Y}\right)$ is given by 
\begin{align}
& h\left(\mathbf{X} \|\mathbf{Y}\right) = \lim\limits_{N \rightarrow \infty} \frac{1}{N} \textstyle \sum\limits_{n=1}^{N} h\left(x_n| x_{n-1}, y_{n} \right) \label{AppendixCeq10} \\
& = \lim\limits_{N \rightarrow \infty} \frac{1}{N} \textstyle \sum\limits_{n=1}^{N} \left\{h\left(x_n , y_{n} , x_{n-1}\right) - h\left(x_{n-1}, y_{n} \right) \right\} \nonumber \\
& = \lim\limits_{N \rightarrow \infty} \frac{1}{N} \textstyle \sum\limits_{n=1}^{N} \left\{ \frac{1}{2} \log\left(2\pi e |\Phi_1| \right) - \frac{1}{2} \log\left(2\pi e |\Phi_2| \right)\right\} \nonumber \\
& = \frac{1}{2} \log\left(2\pi e \frac{\sigma_x^2 \sigma_z^2}{\beta_1^2 \sigma_x^2 + \sigma_z^2} \right), \label{AppendixCeq11}								
\end{align}
where $\Phi_1$ and $\Phi_2$ are the appropriate covariance matrices. 
The reason for \eqref{AppendixCeq10} is that conditioned on $x_{n-1}$ and $y_n$, $x_n$ is independent of the other past samples of $\mathbf{X}$ and $\mathbf{Y}$. Since $\mathbf{X}$ is drawn from i.i.d. Gaussian distribution with mean zero and variance $\sigma_x^2$, the unconditional entropy of $\mathbf{X}$ is given by
$h\left(\mathbf{X}\right) = \frac{1}{2} \log \left(2\pi e \sigma_x^2 \right). $
Therefore, the DI from $\mathbf{Y}$ to $\mathbf{X}$ is
\begin{align}
I\left(\mathbf{Y} \rightarrow \mathbf{X}\right)\!\! = \!\!h\left(\mathbf{X}\right) - h\left(\mathbf{X}\|\mathbf{Y}\right)\!\! = \!\! \frac{1}{2} \log \left(1 + \frac{\beta_1^2\sigma_x^2}{\sigma_z^2} \right).
\end{align}

\subsection{Special cases} \label{AppendixC2}
Consider the system in \eqref{linear_two_node_eq} with $\beta_1=1,\beta_2=0$. For this system, $y_n$ are i.i.d. Gaussian distributed with mean zero and variance $\left(\sigma_x^2 + \sigma_z^2\right)$. Therefore the differential entropy of $\mathbf{Y}_1^N$ is given by $h\left(\mathbf{Y}^N\right) = \frac{N}{2}\log\left(2\pi e \left(\sigma_x^2 + \sigma_z^2 \right) \right)$. Also the joint differential entropy of $x_n$ and $y_n$ is 
\begin{align}
h\!\left(x_n,y_n\right) \!\! =  \!\! \textstyle \frac{1}{2}\!\log \! \left(\!2\pi e \begin{vmatrix}
														\sigma_x^2 \!\!& \!\!\sigma_x^2 \\
														\sigma_x^2 \!\!& \!\!\sigma_x^2 + \sigma_z^2
														\end{vmatrix}  \right) 
					\!\! = \!\! \frac{1}{2}\!\log \!\left(\!2\pi e 	\sigma_x^2	\sigma_z^2\! \right).											
\end{align}
$ \implies h\left(\mathbf{Y}^N\|\mathbf{X}^N\right)  = \textstyle \textstyle \sum\limits_{n=1}^N h\left(y_n|x_n\right)$ $ = \textstyle\textstyle \sum\limits_{n=1}^N \big(h\left(x_n,y_n\right) - h\left(x_n\right) \big) = \frac{N}{2}\log \left(2\pi e \sigma_z^2 \right). $
Therefore the directed information from $\mathbf{X}$ to $\mathbf{Y}$ is given by
\begin{align}
I\!\!\left(\mathbf{X} \!\! \rightarrow \!\! \mathbf{Y}\right)\!\! = \!\! \lim\limits_{N \! \rightarrow \! \infty} \!\! \big(\! h\left(\!\mathbf{Y}^N\!\right) \!\!-\!\! h\left(\!\mathbf{Y}^N\!\|\!\mathbf{X}^N\!\right)\!\!\big) \!\! = \!\! \frac{1}{2}\!\log\! \left(\!1\!+\!\frac{\sigma_x^2}{\sigma_z^2}\!\right).
\end{align}
The DI from $\mathbf{Y}$ to $\mathbf{X}$ can be similarly derived.

Now, consider the system in \eqref{linear_two_node_eq} with $\beta_1=0,\beta_2=1$. For this system, the DI from $\mathbf{X}$ to $\mathbf{Y}$ is computed by following the approach described above. Let us derive $I\left(\mathbf{Y}\rightarrow \mathbf{X}\right)$. The causal conditional entropy of $\mathbf{X}^N$ given $\mathbf{Y}^N$ is given by
\begin{equation} \label{AppendixCeq3}
h\!\left(\mathbf{X}^N \!\|\! \mathbf{Y}^N \right) \!\!=\!\!\! \textstyle \sum\limits_{n=1}^{N}\!\! h\!\left(x_n| \mathbf{X}^{n-1}\!,\!\mathbf{Y}^N\right) 
					\!\! =\!\!\!\textstyle \sum\limits_{n=1}^{N}\!\! h\!\left(x_n\right)\!\! = \!\!h\!\left( \mathbf{X}^N\right),
\end{equation}
since $x_n$ does not depend on the past samples of $\mathbf{Y}$. Therefore, the DI from $\mathbf{Y}$ to $\mathbf{X}$ is zero, i.e., $I\left(\mathbf{Y} \rightarrow \mathbf{X} \right) = 0$.
\end{appendices}

\bibliographystyle{IEEEtran}
\bibliography{refs}

\begin{wrapfigure}{L}{0.28\columnwidth}
\centering
\includegraphics[width=0.3\columnwidth]{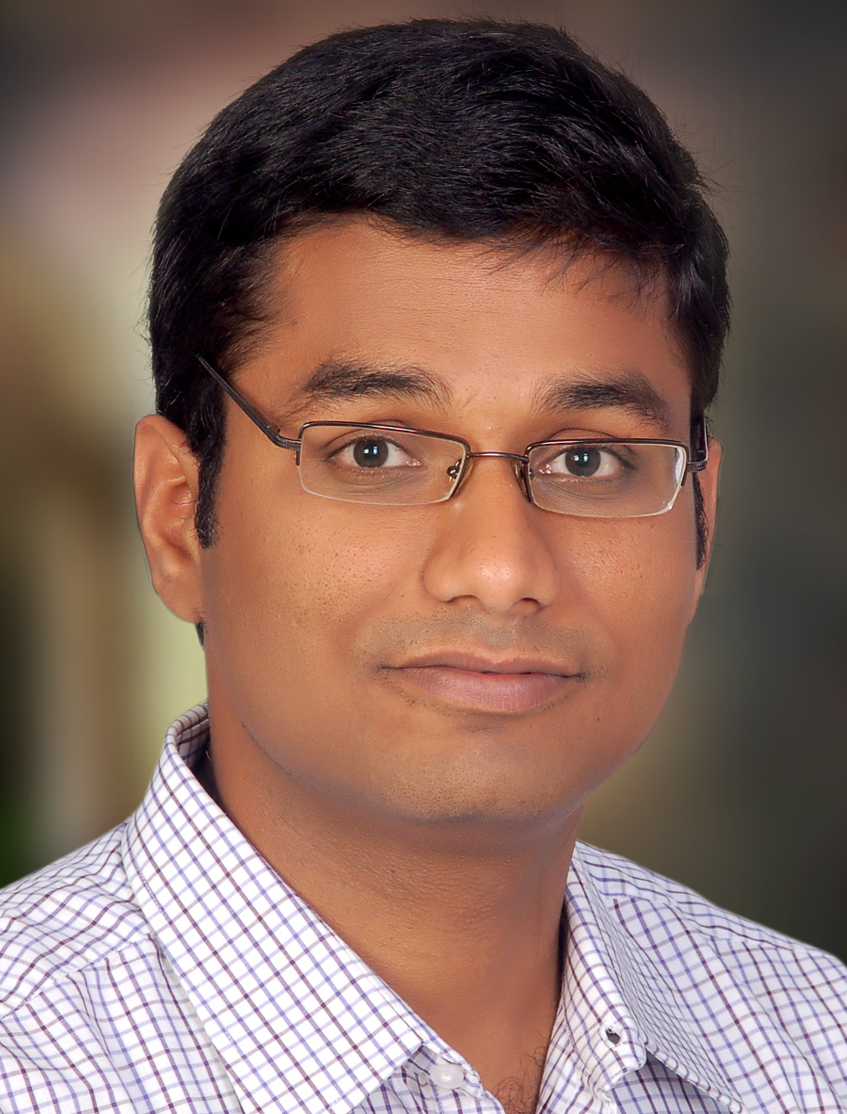}
\end{wrapfigure}
Rakesh Malladi is a graduate student in Electrical and Computer Engineering Department at Rice University,
Houston since fall 2011. He graduated with a dual degree (B.Tech $\&$ M.Tech) in Electrical Engineering from Indian
Institute of Technology Madras in 2011. He was a research intern at Cyberonics, Houston, USA during the summer
of 2015, at Texas Instruments, Dallas, USA during the summer of 2013 and at IBM Research, Bangalore, India
during the summer of 2010. His research interests spans topics in signal processing, machine learning, computational
neuroscience and information theory.

\begin{wrapfigure}{L}{0.33\columnwidth}
\centering
\includegraphics[width=0.35\columnwidth]{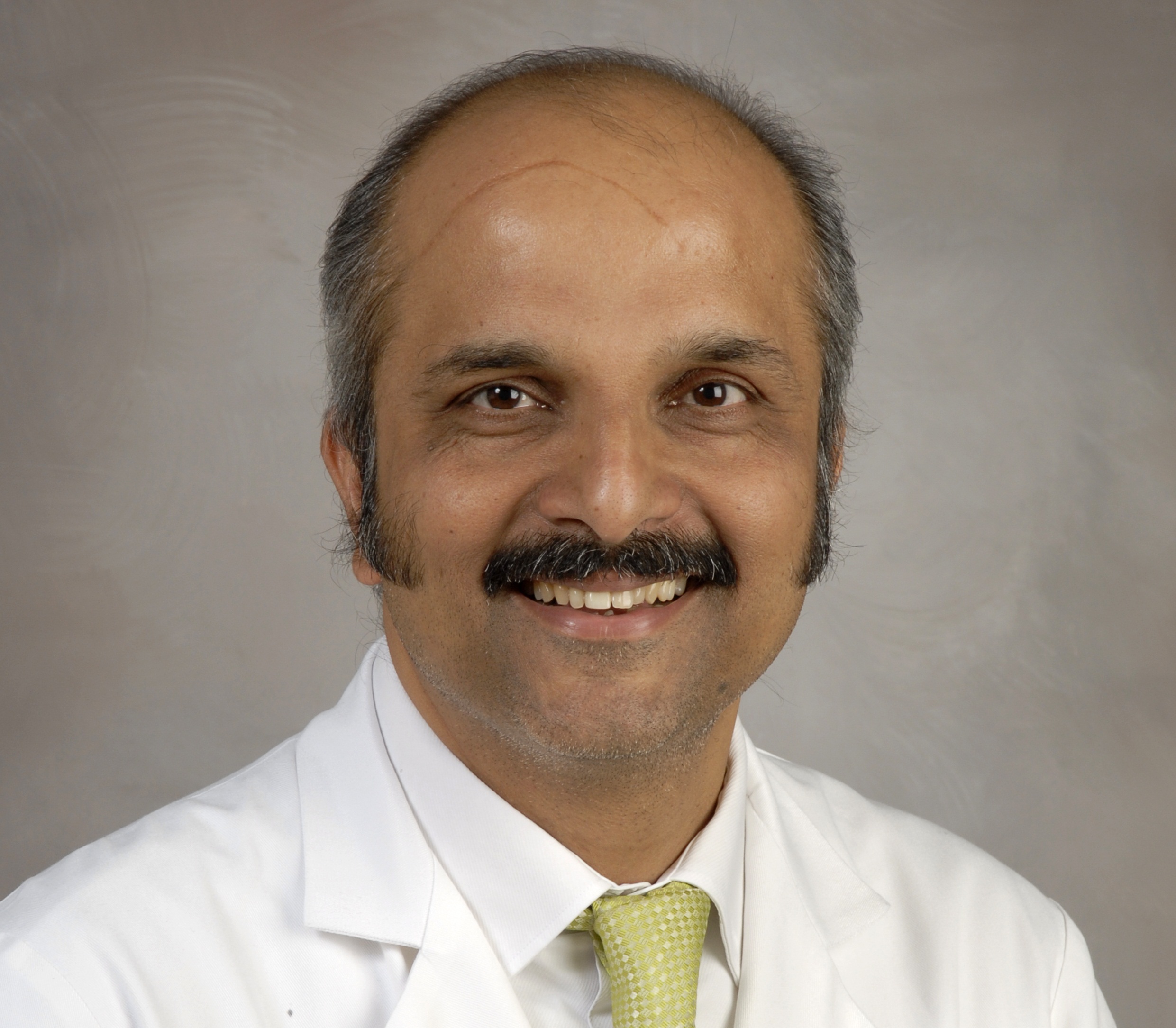}
\end{wrapfigure}
Giridhar Kalamangalam was born in Salem, India in 1965. He received the MBBS degree from the Jawaharlal Institute in Pondicherry, India in 1989, and the MSc and DPhil degrees in applied mathematics from Oxford University, UK, in 1991 and 1995. Following postgraduate general medical training (MRCP (UK)) he qualified as a neurologist (Glasgow, UK) with subspecialty expertise in epilepsy (Cleveland Clinic, USA). Since 2006, he has been with the University of Texas Health Science Center in Houston, TX, where he is currently Associate Professor of Neurology. His research interests are in the physiological dynamics and the neuroimaging of epilepsy and cognitive function.

\begin{wrapfigure}{L}{0.28\columnwidth}
\centering
\includegraphics[width=0.3\columnwidth]{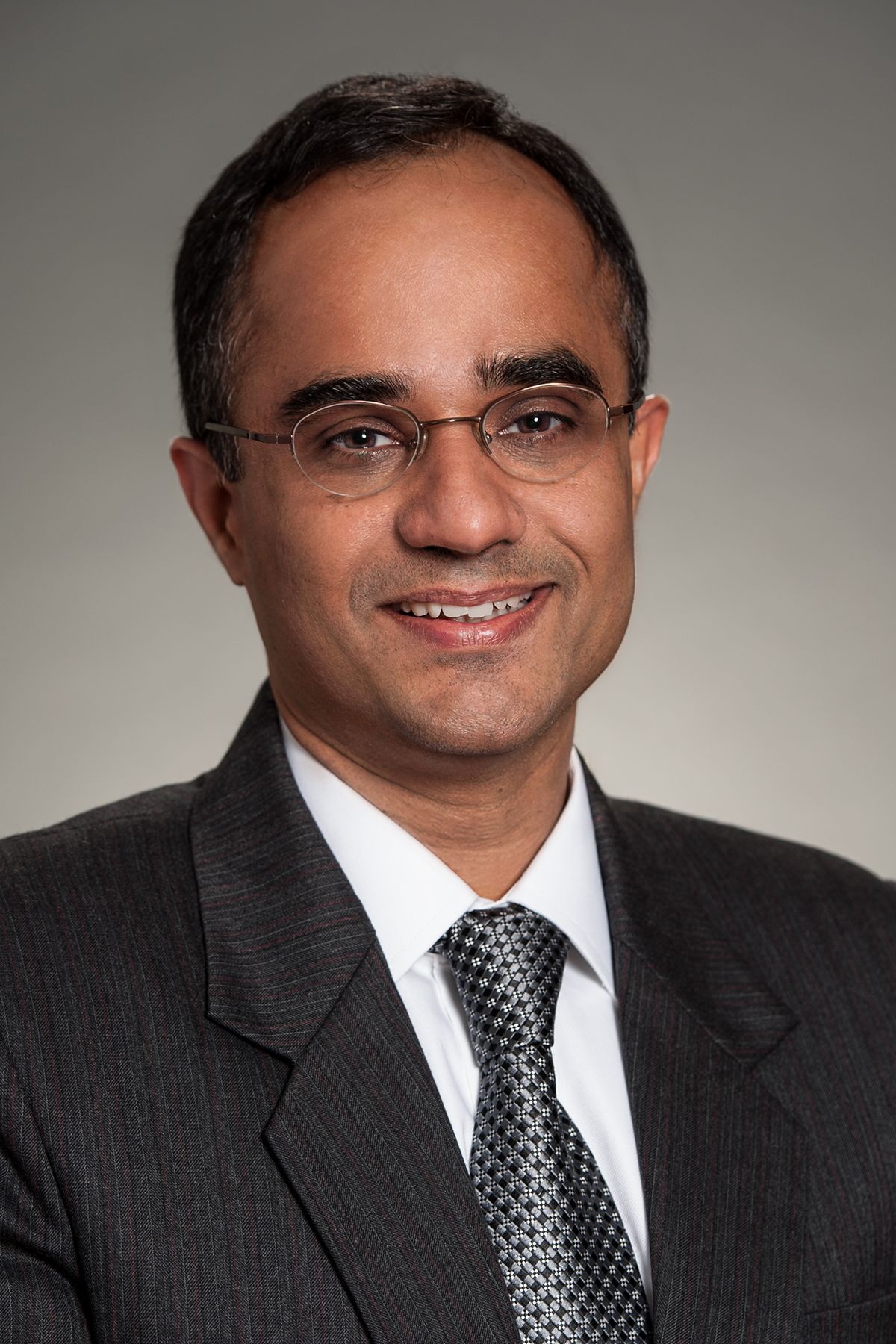}
\end{wrapfigure}
Nitin Tandon, MD, FAANS is a professor of neurosurgery at University of Texas Medical School at Houston. He received his medical degree at the Armed Forces Medical College in Pune, India, followed by a residency in neurosurgery at The University of Texas Health Science Center in San Antonio and a fellowship in epilepsy surgery at Cleveland Clinic. He has been in practice for thirteen years at Memorial Hermann TMC and on faculty at the UT Health Medical School. He has performed more than 3000 brain operations, with over 1200 for brain tumors and 600  for epilepsy. His lab performs multi-modality assessments of cognitive functions combining and correlating intracranial recordings, functional MRI, tractography and direct cortical stimulation. The thrust of his research is the development of optimal tools to characterize interactional brain processes across regions using intracranial recordings (www.tandonlab.org)

\begin{wrapfigure}{L}{0.28\columnwidth}
\centering
\includegraphics[width=0.3\columnwidth]{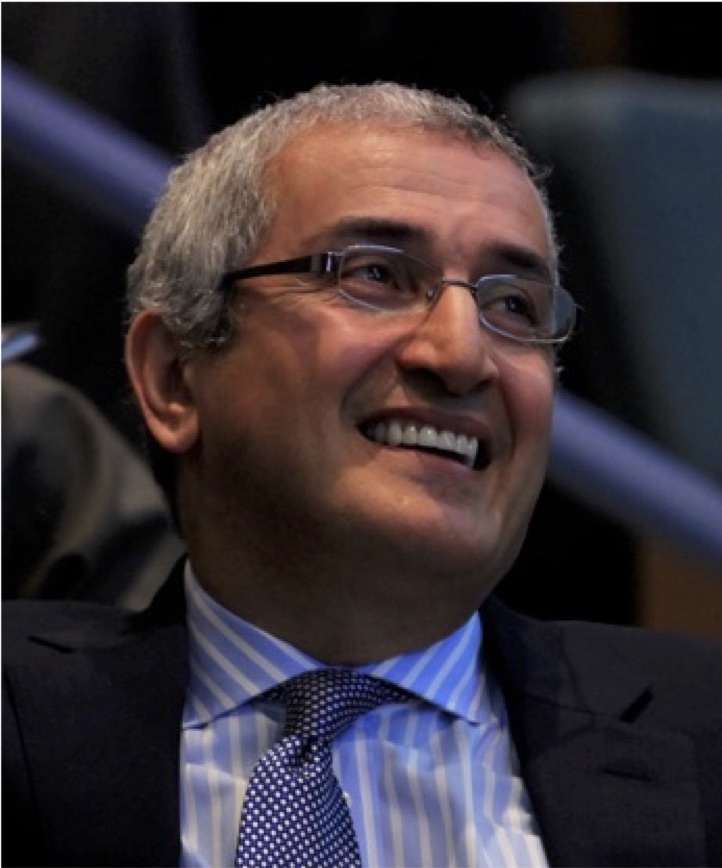}
\end{wrapfigure}
Behnaam Aazhang received his B.S. (with highest honors), M.S., and Ph.D. degrees in Electrical and Computer Engineering from University of Illinois at Urbana-Champaign in 1981, 1983, and 1986, respectively. In August 1985, he joined the faculty of Rice University, Houston, Texas, where he is now the J.S. Abercrombie Professor in the Department of Electrical and Computer Engineering Professor and Director of Center on Neuro-Engineering, a multi-university research center in Houston, Texas. He is a Fellow of IEEE and AAAS, and also a recipient of 2004 IEEE Communication Society's Stephen O. Rice best paper award for a paper with A. Sendonaris and E. Erkip. In addition, the authors received IEEE Communication Society's 2013 Advances in Communication Award for the same paper. He has been listed in the Thomson-ISI Highly Cited Researchers and has been keynote and plenary speaker of several conferences.

\end{document}